\pgfplotsset{compat=1.18}
\theoremstyle{plain}
\newtheorem{theorem}{Theorem}[section]
\theoremstyle{definition}
\newtheorem{example}[theorem]{Example}
\theoremstyle{remark}
\newtheorem{remark}{\sc Remark}
\def\namedlabel#1#2{\begingroup
   \def\@currentlabel{#2}%
   \label{#1}\endgroup}
\date{}
\title{\bf  Expected Confidence Dependency: A Novel Rough Set-Based Approach to Feature Selection}\vspace{.25 in}
\author{
		Saeed Rasouli$^{a,}$\thanks{%
		Corresponding author. E-mail: \href{mailto: srasouli@pgu.ac.ir }{srasouli@pgu.ac.ir }},
	 \ and \
		Hamid  Karamikabir$^{b,}$\thanks{%
		E-mail: \href{mailto: h\_karamikabir@pgu.ac.ir}{h\_karamikabir@pgu.ac.ir}}		\\
		\centerline{$^a$Department of Mathematics, Faculty of Intelligent Systems Engineering and Data Science,}\\
\centerline{Persian Gulf University, Bushehr, Iran}\\
	\centerline{$^b$Department of Statistics,   Faculty of Intelligent Systems Engineering and Data Science,}\\
\centerline{Persian Gulf University, Bushehr, Iran} }
\begin{document}
 \maketitle
\begin{abstract}
This paper proposes Expected Confidence Dependency (ECD), a novel, soft-computing–oriented, accuracy-driven dependency measure for feature selection within the rough set theory framework. Unlike traditional rough-set dependency measures that rely on binary characterizations of conditional blocks, ECD assigns confidence-based contributions to individual equivalence blocks and aggregates them through a normalized expectation operator. We formally establish several desirable properties of ECD, including normalization, compatibility with classical dependency, monotonicity, and invariance under structural and label-preserving transformations. 

To evaluate the practical utility of ECD, we conduct an extensive comparative study against three established rough-set dependency measures—classical, relative, and direct—using four benchmark datasets from the UCI Machine Learning Repository. The experimental results demonstrate that ECD yields a more informative and stable assessment of attribute significance, leading to improved feature selection performance. These findings suggest that ECD provides a theoretically grounded and computationally robust criterion that enhances the reliability of rough-set–based learning and uncertainty modeling.
\footnote{2020 Mathematics Subject Classification: 03E72, 68T10.\\
\textit{Key words and phrases}: Rough Set Theory; Expected Confidence Dependency (ECD); Attribute Dependency Measure; Feature Selection; Uncertainty Modeling; Soft Computing.}
\end{abstract}


\section{Introduction}\label{sec1}

The exponential growth of data in both volume and dimensionality poses substantial challenges for efficient processing, storage, and analysis. Modern datasets, particularly those arising in bioinformatics, text mining, image recognition, and social network analysis often contain tens or even hundreds of thousands of features~\cite{tan2014introduction, wang2016feature}. Such high-dimensional data are computationally demanding, and the presence of redundant or irrelevant attributes may obscure meaningful patterns, hinder interpretability, and degrade the performance of learning algorithms.

Dimensionality reduction has therefore become indispensable in contemporary data-driven research. Two general approaches are commonly employed: \emph{feature extraction}, which transforms the original feature space into a lower-dimensional latent representation (e.g., principal component analysis or autoencoders), and \emph{feature selection}, which identifies a subset of the most informative features. The latter is particularly attractive because it preserves the semantic meaning of the original attributes while eliminating redundancy and noise~\cite{guyon2003introduction}. A carefully selected subset enhances interpretability and computational efficiency and often improves generalization in tasks such as classification, clustering, and rule induction.

Feature selection proves especially effective in high-dimensional settings~\cite{dessi2015similarity, hong2015using, paul2015simultaneous}, where it can markedly reduce computation time and improve predictive performance. As a result, diverse strategies have been proposed, ranging from filter-based methods that evaluate the relevance of individual features to wrapper and embedded methods that assess subsets in conjunction with learning algorithms~\cite{qian2015mutual, han2015global, wei2015heterogeneous, shi2015semi, moradi2015graph, moradi2015integration, bouhamed2015feature, liu2015class, raza2018parallel, raza2018feature, raza2018heuristic}.

Within this methodological spectrum, \emph{Rough Set Theory} (RST), introduced by Pawlak~\cite{pawlak1982rough, pawlak2007rudiments}, has emerged as a powerful mathematical framework for knowledge representation and uncertainty modeling—two core aspects of soft computing. RST characterizes data through indiscernibility relations, equivalence classes, and approximation spaces, enabling reasoning about vagueness without requiring auxiliary information such as probability distributions or membership functions. A central quantity in RST is the \emph{dependency degree}, which quantifies the extent to which conditional attributes determine a decision attribute. Although classical dependency measures are effective, their binary treatment of positive-region membership limits their sensitivity to uncertainty and noise, both of which are inherent to real-world data.

To overcome these limitations, several extensions have sought to enhance the robustness and computational efficiency of dependency estimation. examples include the \emph{parallel rough set-based dependency calculation}~\cite{raza2018parallel}, \emph{heuristic-based dependency calculation}~\cite{raza2018heuristic}, the \emph{direct dependency calculation}~\cite{raza2018feature}. motivated by these advancements and by the soft computing paradigm that values tolerance to imprecision we revisit the concept of dependency from an expectation-driven perspective.

This paper introduces the \emph{expected confidence dependency} (ECD), a novel generalization of Pawlak’s classical dependency that incorporates expected classification confidence across conditional equivalence classes. Unlike the Pawlak traditional dependency, which relies solely on crisp positive-region membership, ECD employs a probabilistic and expectation-based formulation. By weighting classification confidence according to class size and internal consistency, ECD captures partial and uncertain dependencies, yielding a continuous, interpretable, and computationally tractable measure of attribute influence. Consequently, ECD provides a robust and flexible analytical tool for feature selection, rule induction, and knowledge discovery under uncertainty.

The remainder of this paper is organized as follows. Section~\ref{sec2} reviews the mathematical preliminaries of rough set theory, introducing the key notions of information and decision systems, indiscernibility relations, and set approximations. These concepts establish the theoretical groundwork for the subsequent development of dependency measures, illustrated through running examples. Section~\ref{sec3} introduces the proposed \emph{expected confidence dependency} measure as a novel generalization of the classical dependency concept. This measure extends the classical dependency framework by incorporating the concept of \emph{classification confidence} within each conditional equivalence class and averaging these confidences across the entire universe. This section outlines the intuition behind ECD and demonstrates its computation through illustrative examples. Comparative analyses and tabular results highlight the measure’s advantages over the classical, relative, and direct dependencies in both accuracy and robustness. Section~\ref{sec4} formalizes the theoretical underpinnings of the proposed measure and rigorously demonstrates that \(ECD\) satisfies the fundamental properties expected of a dependency function in the RST framework. In particular, Theorem~\ref{fecdpro} establishes \emph{normalization}, \emph{consistency}, the \emph{relation to classical dependency}, and \emph{monotonicity}. These results collectively confirm that \(ECD\) is theoretically sound, mathematically consistent, and computationally reliable for data-driven dependency assessment. Section~\ref{sec5} examines the applicability of \(ECD\) to data analysis tasks, with a primary focus on feature selection. It shows how \(ECD\) enriches rough set-based methodologies by enabling a more nuanced evaluation of attribute relevance and decision consistency. To assess its effectiveness in feature selection, ECD is compared with three established dependency measures classical, relative, and direct using four benchmark datasets from the UCI Machine Learning Repository. The comparative analyses demonstrate that ECD offers a theoretically robust and computationally reliable criterion for data-driven dependency assessment and feature selection. Finally, Section~\ref{sec6} concludes the paper by summarizing the main findings, highlighting theoretical and practical contributions, and outlining directions for future research.

\section{Preliminaries of Rough Set Theory}\label{sec2}

Rough set theory (RST), originally introduced by Pawlak in the early 1980s~\cite{pawlak1982rough, pawlak1991rough}, provides a mathematically rigorous framework for modeling and analyzing uncertainty and vagueness in information systems. Its foundation lies in approximating target sets through indiscernibility relations induced by object attributes. The core idea is to approximate a target concept—represented as a subset of the universe—by means of its lower and upper approximations, defined with respect to equivalence classes of indiscernible objects.

This section revisits the fundamental definitions, properties, and theoretical results of RST that underpin subsequent developments. To enhance interpretability, illustrative examples are included to connect abstract theoretical notions with their practical interpretations.

\subsection{Information Systems}

Following Pawlak~\cite{pawlak1981information}, the concept of an \emph{information system} constitutes the fundamental structure underpinning RST. It provides a formal representation of data by associating each object with a corresponding set of attribute values. Formally, an information system is defined as
\[
K = (U, A, \{V_{a}\}_{a \in A}, f),
\]
where:
\begin{itemize}
    \item $U$ is a non-empty finite set of objects, referred to as the \emph{universe};
    \item $A$ is a non-empty finite set of attributes (or features);
    \item $\{V_{a}\}_{a \in A}$ denotes a family of non-empty sets, where each $V_a$ represents the domain of possible values for attribute $a$;
    \item $f: U \times A \to V = \bigcup_{a \in A} V_a$ is the \emph{information function}, satisfying $f(x, a) \in V_a$ for every $x \in U$ and $a \in A$.
\end{itemize}

\begin{remark}
\begin{itemize}
  \item For notational convenience, the information system $K = (U, A, \{V_{a}\}_{a \in A}, f)$ is often abbreviated as $K = (U, A)$. The information function $f(x, a)$ may equivalently be denoted as $a(x)$ for all $x \in U$ and $a \in A$.
  \item An information system can be represented as a data table in which rows correspond to objects and columns to attributes. Each entry associated with a pair $(x, a)$, where $x \in U$ and $a \in A$, contains the attribute value $a(x)$.
\end{itemize}
\end{remark}

\begin{example}\label{insyex1}
Consider the information system $H = (U, A)$ derived from data concerning fourteen job applicants in a recruitment process. The system encapsulates essential candidate attributes prior to any hiring decision. Table~\ref{taba1} lists the attributes considered, and Table~\ref{tab:tabi1} presents the corresponding information table.

\begin{table}[H]
\footnotesize
\centering
\caption{Attributes used in the hiring process}
\label{taba1}
\vspace{0.5em}
\begin{tabular}{@{}llll@{}}
\toprule
\textbf{Attribute Name} & \textbf{Symbol} & \textbf{Domain} & \textbf{Type} \\ 
\midrule
Experience Level        & Exp    & \{Junior, Mid, Senior\}            & Categorical \\
Education Level         & Edu    & \{HighSchool, Bachelors, Masters\} & Categorical \\
Coding Test Score       & Test   & \{0, 1, 2\}                        & Numerical   \\
Communication Skill     & Comm   & \{0, 1, 2\}                        & Numerical   \\
Willingness to Relocate & Reloc  & \{Yes, No\}                        & Categorical \\
\bottomrule
\end{tabular}
\end{table}

\begin{table}[H]
\footnotesize
\centering
\caption{Information system representing job applicants}
\label{tab:tabi1}
\begin{tabular}{
>{\columncolor[HTML]{9B9B9B}}c llccl}
\hline
Candidate & \multicolumn{5}{c}{\cellcolor[HTML]{C0C0C0}Attributes} \\ \cline{2-6} 
\textbf{} & \textbf{Exp} & \textbf{Edu} & \textbf{Test} & \textbf{Comm} & \textbf{Reloc} \\ \hline
$x_{1}$   & Junior & Bachelors  & 1 & 2 & Yes \\ 
$x_{2}$   & Senior & Masters    & 2 & 1 & No  \\ 
$x_{3}$   & Mid    & Bachelors  & 1 & 1 & Yes \\ 
$x_{4}$   & Junior & HighSchool & 0 & 1 & Yes \\ 
$x_{5}$   & Senior & Masters    & 2 & 2 & No  \\ 
$x_{6}$   & Mid    & Bachelors  & 1 & 1 & No  \\ 
$x_{7}$   & Junior & HighSchool & 0 & 0 & Yes \\ 
$x_{8}$   & Mid    & Masters    & 2 & 2 & Yes \\ 
$x_{9}$   & Senior & Bachelors  & 1 & 0 & No  \\ 
$x_{10}$  & Mid    & HighSchool & 0 & 2 & Yes \\ 
$x_{11}$  & Junior & Masters    & 1 & 2 & Yes \\ 
$x_{12}$  & Senior & Masters    & 2 & 2 & No  \\ 
$x_{13}$  & Mid    & Bachelors  & 1 & 0 & Yes \\ 
$x_{14}$  & Senior & HighSchool & 1 & 1 & No  \\ \hline
\end{tabular}%
\end{table}

This information system enables systematic comparison among candidates with respect to experience, education, technical performance, communication skills, and willingness to relocate. Although no hiring decision is specified, the dataset serves as a foundation for constructing a decision table or applying classification and rule-induction techniques.
\end{example}

\subsection{Decision Systems}

A \emph{decision system} extends an information system by introducing one or more \emph{decision attributes}. A decision attribute—often denoted as a \emph{class}—represents an outcome variable whose value depends on other attributes, called \emph{conditional attributes}. Formally, a decision system is defined as
\[
K = (U, C \cup D),
\]
where $C$ denotes the set of conditional attributes and $D$ denotes the set of decision attributes.

\begin{example}\label{desyex1}
Table~\ref{tab:tabd1} illustrates a decision system derived from Example~\ref{insyex1}. It extends the information system $H$ (Table~\ref{tab:tabi1}) by incorporating a decision attribute, \textbf{Hire}, which indicates whether a given candidate was ultimately employed.

\begin{table}[H]
\footnotesize
\centering
\caption{Decision system representing job applicants}
\label{tab:tabd1}
\begin{tabular}{>{\columncolor[HTML]{9B9B9B}}c cccccc}
\hline
Candidate & \multicolumn{5}{c}{\cellcolor[HTML]{C0C0C0}Conditional Attributes} & \cellcolor[HTML]{EFEFEF}Decision \\ \cline{2-7}
 & \textbf{Exp} & \textbf{Edu} & \textbf{Test} & \textbf{Comm} & \textbf{Reloc} & \textbf{Hire} \\ \hline
$x_{1}$  & Junior & Bachelors  & 1 & 2 & Yes & \cellcolor[HTML]{EFEFEF}Yes \\ 
$x_{2}$  & Senior & Masters    & 2 & 1 & No  & \cellcolor[HTML]{EFEFEF}Yes \\ 
$x_{3}$  & Mid    & Bachelors  & 1 & 1 & Yes & \cellcolor[HTML]{EFEFEF}Yes \\ 
$x_{4}$  & Junior & HighSchool & 0 & 1 & Yes & \cellcolor[HTML]{EFEFEF}No  \\ 
$x_{5}$  & Senior & Masters    & 2 & 2 & No  & \cellcolor[HTML]{EFEFEF}Yes \\ 
$x_{6}$  & Mid    & Bachelors  & 1 & 1 & No  & \cellcolor[HTML]{EFEFEF}No  \\ 
$x_{7}$  & Junior & HighSchool & 0 & 0 & Yes & \cellcolor[HTML]{EFEFEF}No  \\ 
$x_{8}$  & Mid    & Masters    & 2 & 2 & Yes & \cellcolor[HTML]{EFEFEF}Yes \\ 
$x_{9}$  & Senior & Bachelors  & 1 & 0 & No  & \cellcolor[HTML]{EFEFEF}No  \\ 
$x_{10}$ & Mid    & HighSchool & 0 & 2 & Yes & \cellcolor[HTML]{EFEFEF}No  \\ 
$x_{11}$ & Junior & Masters    & 1 & 2 & Yes & \cellcolor[HTML]{EFEFEF}Yes \\ 
$x_{12}$ & Senior & Masters    & 2 & 2 & No  & \cellcolor[HTML]{EFEFEF}No  \\ 
$x_{13}$ & Mid    & Bachelors  & 1 & 0 & Yes & \cellcolor[HTML]{EFEFEF}No  \\ 
$x_{14}$ & Senior & HighSchool & 1 & 1 & No  & \cellcolor[HTML]{EFEFEF}No  \\ \hline
\end{tabular}%
\end{table}

This decision system explicitly distinguishes between conditional and decision attributes, facilitating the application of classification and rule-extraction techniques within the RST framework.
\end{example}


\subsection{Indiscernibility Relation}

The concept of \emph{indiscernibility} formalizes the notion that certain objects cannot be distinguished based on a given subset of attributes. This idea constitutes the cornerstone of RST, forming the basis for set approximations and knowledge granulation.

Let $K = (U, A)$ denote an information system. For any subset $B \subseteq A$, the \emph{indiscernibility relation} induced by $B$ on $K$ is defined as
\[
\mathrm{Ind}_K(B) = \{(x, y) \in U^2 \mid \forall a \in B,\; a(x) = a(y)\}.
\]
Equivalently, this relation can be expressed as the intersection of relations induced by individual attributes:
\[
\mathrm{Ind}_K(B) = \bigcap_{a \in B} \mathrm{Ind}_K(a).
\]

When the underlying system $K$ is clear from context, the subscript is omitted for brevity. The family of all such relations over $K$ is denoted by
\[
\mathrm{Ind}(K) = \{ \mathrm{Ind}(B) \mid \emptyset \neq B \subseteq A \}.
\]

Given $K = (U, A)$ and $B \subseteq A$, the relation $\mathrm{Ind}(B)$ constitutes an equivalence relation on $U$. The equivalence classes generated by $\mathrm{Ind}(B)$ form a partition of $U$, denoted by $U/\mathrm{Ind}(B)$ or simply $U/B$. For any $x \in U$, the equivalence class containing $x$ is denoted by $B(x)$, with alternative notations $[x]_B$ or $[x]_{\mathrm{Ind}(B)}$.

Since empirical data often fail to distinguish between individual objects, reasoning in RST is performed at the level of equivalence classes, which are regarded as \emph{granules of knowledge}~\cite{pal2012rough, pawlakrough, pawlak2007rudiments, polkowski2001rough}. Two objects $x,y \in U$ are said to be \emph{indiscernible with respect to $B$} if $(x,y) \in \mathrm{Ind}(B)$. The equivalence classes, or blocks of $U/B$, are also referred to as \emph{elementary sets} or \emph{elementary granules} determined by $B$.

\begin{example}\label{indecrelex}
Consider the decision system $H$ presented in Example~\ref{desyex1}. 
The following illustrate sample partitions of the universe induced by various subsets of attributes:
{\footnotesize\begin{longtable}{l p{8cm}}
\toprule
\textbf{Partition} & \textbf{Equivalence Classes} \\ 
\midrule
$U/\{\text{Exp}\}$ & $\{x_1, x_4, x_7, x_{11}\}$, $\{x_3, x_6, x_8, x_{10}, x_{13}\}$, $\{x_2, x_5, x_9, x_{12}, x_{14}\}$ \\
$U/\{\text{Edu}\}$ & $\{x_4, x_7, x_{10}, x_{14}\}$, $\{x_1, x_3, x_6, x_9, x_{13}\}$, $\{x_2, x_5, x_8, x_{11}, x_{12}\}$ \\
$U/\{\text{Test}\}$ & $\{x_4, x_7, x_{10}\}$, $\{x_1, x_3, x_6, x_9, x_{11}, x_{13}, x_{14}\}$, $\{x_2, x_5, x_8, x_{12}\}$ \\
$U/\{\text{Comm}\}$ & $\{x_1, x_5, x_8, x_{10}, x_{11}\}$, $\{x_2, x_3, x_4, x_6, x_{14}\}$, $\{x_7, x_9, x_{13}\}$ \\
$U/\{\text{Reloc}\}$ & $\{x_1, x_3, x_4, x_7, x_8, x_{10}, x_{11}, x_{13}\}$, $\{x_2, x_5, x_6, x_9, x_{12}, x_{14}\}$ \\
$U/\{\text{Comm}, \text{Reloc}\}$ & $\{x_1, x_8, x_{10}, x_{11}\}$, $\{x_2, x_6, x_{14}\}$, $\{x_3, x_4\}$, $\{x_5, x_{12}\}$, $\{x_7\}$, $\{x_9, x_{13}\}$ \\
$U/\{\text{Exp}, \text{Edu}, \text{Comm}\}$ & $\{x_3, x_6, x_{13}\}$, $\{x_1, x_{11}\}$, $\{x_2, x_5, x_{12}\}$, $\{x_4\}$, $\{x_7\}$, $\{x_8\}$, $\{x_9\}$, $\{x_{10}\}$, $\{x_{14}\}$ \\
$U/C$ & $\{x_1\}$, $\{x_2\}$, $\{x_3\}$, $\{x_4\}$, $\{x_5\}$, $\{x_6\}$, $\{x_7\}$, $\{x_8\}$, $\{x_9\}$, $\{x_{10}\}$, $\{x_{11}\}$, $\{x_{12}\}$, $\{x_{13}\}$, $\{x_{14}\}$ \\
$U/D$ & $\{x_1, x_2, x_3, x_5, x_8, x_{11}\}$, $\{x_4, x_6, x_7, x_9, x_{10}, x_{12}, x_{13}, x_{14}\}$ \\
\bottomrule
\end{longtable}}
\end{example}

\subsection{Rough Sets and Approximations}

RST provides a formal framework for addressing uncertainty and vagueness by approximating imprecise concepts through their \emph{lower} and \emph{upper} approximations. This formalism underlies numerous applications, including knowledge discovery, decision analysis, and pattern recognition, particularly in systems characterized by incomplete or inconsistent data.

Elementary sets, introduced via indiscernibility relations, constitute the fundamental \emph{granules of knowledge} in RST. Unions of these granules are termed \emph{definable sets} with respect to a chosen attribute subset $B$. Formally, let $K = (U, A)$ denote an information system and $B \subseteq A$. A subset $X \subseteq U$ is \emph{definable} (or \emph{exact}) with respect to $B$ if it can be represented as a union of elementary sets induced by $B$; otherwise, $X$ is \emph{undefinable} (or \emph{rough}).

\begin{example}
Consider Example~\ref{indecrelex}. The set 
\[
X = \{x_2, x_3, x_4, x_6, x_{14}\}
\]
is \emph{exact} with respect to $\{\text{Comm}, \text{Reloc}\}$, as it can be expressed as a union of equivalence classes generated by these attributes. However, $X$ is \emph{rough} with respect to $\{\text{Test}\}$, since it cannot be expressed in this form.
\end{example}

For any rough set $X$, two definable sets with respect to $B$ are associated with it: the \emph{lower approximation} and the \emph{upper approximation}.  

In this context, the pair $(U, B)$ is termed an \emph{approximation space}, and the rough approximation operator is defined as
\[
\mathrm{Apr}^{U}_{B}(X) = \big( \underline{\mathrm{Apr}}^{U}_{B}(X), \; \overline{\mathrm{Apr}}^{U}_{B}(X) \big),
\]
where
\[
\underline{\mathrm{Apr}}^{U}_{B}(X) = \{ u \in U \mid B(u) \subseteq X \}, 
\qquad
\overline{\mathrm{Apr}}^{U}_{B}(X) = \{ u \in U \mid B(u) \cap X \neq \emptyset \}.
\]
Here, $\underline{\mathrm{Apr}}^{U}_{B}(X)$ denotes the \emph{lower approximation} of $X$, comprising objects that can be certainly classified as belonging to $X$ with respect to $B$, whereas $\overline{\mathrm{Apr}}^{U}_{B}(X)$ represents the \emph{upper approximation}, consisting of all objects that may possibly belong to $X$.  

The set
\[
\mathrm{BND}_{B}(X) = \overline{\mathrm{Apr}}_{B}(X) \setminus \underline{\mathrm{Apr}}_{B}(X)
\]
is termed the \emph{boundary region} of $X$ with respect to $B$. Intuitively, the boundary encompasses objects that cannot be conclusively classified as either belonging to $X$ or to its complement based solely on the attributes in $B$.

\begin{remark}
\begin{itemize}
    \item If $\mathrm{BND}_{B}(X) = \emptyset$, then $X$ is \emph{definable} (or exact) with respect to $B$.  
    \item If $\mathrm{BND}_{B}(X) \neq \emptyset$, then $X$ is \emph{rough} with respect to $B$.  
    \item The cardinality of $\mathrm{BND}_{B}(X)$ quantifies the degree of vagueness: a larger boundary implies greater uncertainty in classifying the elements of $U$ relative to $X$.  
\end{itemize}
\end{remark}

\begin{example}\label{ex:lowerupper}
Consider again the decision system $H$ from Example~\ref{desyex1}.  
Let
\[
X = \{x_{1}, x_{2}, x_{3}, x_{5}, x_{8}, x_{11}\},
\]
representing the set of applicants who were ultimately hired.  

From Example~\ref{indecrelex}, recall that
\[
U/\{\text{Test}\} = 
\big\{
\{x_{4}, x_{7}, x_{10}\},\;
\{x_{1}, x_{3}, x_{6}, x_{9}, x_{11}, x_{13}, x_{14}\},\;
\{x_{2}, x_{5}, x_{8}, x_{12}\}
\big\}.
\]

Thus, the approximations of $X$ with respect to $\{\text{Test}\}$ are given by
\[
\begin{array}{ll}
  \underline{\mathrm{Apr}}_{\{\text{Test}\}}(X) & = \{x_{2}, x_{5}, x_{8}\}, \\[0.5em]
  \overline{\mathrm{Apr}}_{\{\text{Test}\}}(X) & = \{x_{1}, x_{2}, x_{3}, x_{5}, x_{6}, x_{8}, x_{9}, x_{11}, x_{13}, x_{14}\}, \\[0.5em]
  \mathrm{BND}_{\{\text{Test}\}}(X) & = \{x_{1}, x_{3}, x_{6}, x_{9}, x_{11}, x_{13}, x_{14}\}.
\end{array}
\]

In this case, the lower approximation identifies applicants who can be certainly classified as hired based solely on their test scores, since every applicant in their equivalence class was hired.  
The upper approximation captures all applicants who may possibly belong to the hired group.  
The boundary region includes those applicants for whom the test score alone does not provide sufficient evidence for a definitive classification.  

Because the boundary region is nonempty, $X$ is a rough set with respect to the attribute subset $\{\text{Test}\}$.
\end{example}


\subsection{Positive Region}

Let $K = (U, A)$ denote a decision system, where $U$ is the universe of discourse and $A = C \cup D$ represents the set of all attributes, partitioned into conditional attributes $C$ and decision attributes $D$. The decision attributes induce a partition of $U$ into decision classes:
\[
U / D = \{D_1, D_2, \dots, D_k\},
\]
where each block $D_i \subseteq U$ consists of all objects sharing identical decision values.

The \emph{positive region} of $D$ with respect to $C$ is defined as
\[
\mathrm{POS}_C(D) = \bigcup_{i=1}^k \underline{\mathrm{Apr}}_{C}(D_i),
\]
where $\underline{\mathrm{Apr}}_{C}(D_i)$ denotes the lower approximation of the decision class $D_i$ with respect to the conditional attributes $C$. 

Intuitively, $\mathrm{POS}_C(D)$ contains all objects whose classification can be determined with certainty based solely on $C$. Objects outside the positive region fall either in the \emph{boundary region}, where their membership in a decision class is ambiguous, or in the \emph{negative region}, where they are certainly excluded from a given decision class.

\begin{example}\label{posregex}
Consider the decision system $H$ introduced in Example~\ref{desyex1}. Let $C' = \{\text{Test}\}$ denote a subset of conditional attributes. The partition of $U$ induced by $\{\text{Test}\}$ is given by
\[
U/\{\text{Test}\} = 
\big\{
\{x_4, x_7, x_{10}\},\;
\{x_1, x_3, x_6, x_9, x_{11}, x_{13}, x_{14}\},\;
\{x_2, x_5, x_8, x_{12}\}
\big\}.
\]

The correspondence between these conditional blocks and decision classes is as follows:
\begin{itemize}
  \item The block $\{x_2, x_5, x_8, x_{12}\}$ is partially consistent: $x_2, x_5, x_8$ belong to ``Hire = Yes,'' but $x_{12}$ belongs to ``Hire = No.'' Hence, $\{x_2, x_5, x_8\}$ contribute to the lower approximation of ``Hire = Yes.''
  \item The block $\{x_4, x_7, x_{10}\}$ lies entirely within ``Hire = No,'' contributing wholly to its lower approximation.
  \item The block $\{x_1, x_3, x_6, x_9, x_{11}, x_{13}, x_{14}\}$ intersects both decision classes and thus contributes to neither lower approximation.
\end{itemize}

Accordingly, the lower approximations of the decision classes are:
\[
\begin{array}{ll}
\underline{\mathrm{Apr}}_{\{\text{Test}\}}(\text{Hire = Yes}) &= \{x_2, x_5, x_8\}, \\[0.5em]
\underline{\mathrm{Apr}}_{\{\text{Test}\}}(\text{Hire = No})  &= \{x_4, x_7, x_{10}\}.
\end{array}
\]
Therefore, the positive region determined by the attribute \textbf{Test} is:
\[
\mathrm{POS}_{\{\text{Test}\}}(D) = \{x_2, x_4, x_5, x_7, x_8, x_{10}\}.
\]
In this case, six applicants can be classified with certainty as either ``hired'' or ``not hired'' using only their test scores. The remaining eight applicants ($x_1, x_3, x_6, x_9, x_{11}, x_{12}, x_{13}, x_{14}$) cannot be unambiguously classified, since their test values occur in both decision classes.

Performing similar computations for other attribute subsets yields:
\[
\begin{array}{ll}
\mathrm{POS}_{\{\text{Exp}\}}(D) &= \varnothing, \\[4pt]
\mathrm{POS}_{\{\text{Edu}\}}(D) &= \{x_4, x_7, x_{10}, x_{14}\}, \\[4pt]
\mathrm{POS}_{\{\text{Comm}\}}(D) &= \{x_7, x_9, x_{13}\}, \\[4pt]
\mathrm{POS}_{\{\text{Reloc}\}}(D) &= \varnothing, \\[4pt]
\mathrm{POS}_{\{\text{Test},\,\text{Comm}\}}(D) &= \{x_1, x_2, x_4, x_7, x_9, x_{10}, x_{11}, x_{13}\}, \\[4pt]
\mathrm{POS}_{\{\text{Edu},\,\text{Comm}\}}(D)  &= \{x_1, x_2, x_4, x_7, x_9, x_{10}, x_{13}, x_{14}\}, \\[4pt]
\mathrm{POS}_{\{\text{Exp},\,\text{Edu}\}}(D)   &= \{x_1, x_4, x_7, x_8, x_9, x_{10}, x_{11}, x_{14}\}, \\[4pt]
\mathrm{POS}_{\{\text{Test},\,\text{Reloc}\}}(D) &= \{x_4, x_6, x_7, x_8, x_9, x_{10}, x_{14}\}.
\end{array}
\]
Thus, the attribute \textbf{Test} alone allows approximately half of the applicants to be classified unambiguously, whereas combinations such as $\{\text{Test}, \text{Comm}\}$ or $\{\text{Test}, \text{Reloc}\}$ fully resolve all ambiguities, indicating their strong joint discriminative capacity.
\end{example}


\subsection{Dependency}

The concept of \emph{dependency} in RST quantifies the extent to which the decision attributes $D$ are determined by the conditional attributes $C$. This notion forms a foundational basis for \emph{attribute reduction} and \emph{feature selection}, where the objective is to identify the minimal subset of conditional attributes that preserves the decision-making capability of the entire attribute set.

Let $K = (U, C \cup D)$ be a decision system, and let $R \subseteq C$.

\subsubsection*{Classical Dependency}

Following \cite{pawlak1981information}, the \emph{classical} dependency of $D$ on $R$ is defined as
\begin{equation}\label{depnformu}
\textsc{Cla}(R,D) = \frac{\lvert \mathrm{POS}_R(D) \rvert}{\lvert U \rvert},
\end{equation}
where $\mathrm{POS}_R(D)$ denotes the positive region of $D$ with respect to $R$.

\subsubsection*{Relative Dependency}

Following \cite{han2004feature}, the \emph{relative} dependency of $D$ on $R$ is defined as
\begin{equation}\label{relfor}
\textsc{Rel}(R, D) = \frac{\lvert U / R \rvert}{\lvert U / (R \cup D) \rvert},
\end{equation}
where $U/R$ denotes the partition of $U$ induced by $R$.

\subsubsection*{Direct Dependency}

Following \cite{raza2018feature}, the \emph{direct} dependency of $D$ on $R$ is defined as
\begin{equation}\label{dirfor}
\textsc{Dir}(R, D) = \frac{\text{Total unique classes}}{\lvert U \rvert}=\frac{\lvert U / (R \cup D) \rvert}{\lvert U \rvert}.
\end{equation}

\medskip

If $\textsc{Dep}$ denotes any of the above dependency criteria, then the value $\textsc{Dep}(R, D)$, termed the \emph{degree of dependency}, represents the proportion of objects in $U$ that can be certainly classified using the attribute subset $R$.

Specifically:
\begin{itemize}
  \item $\textsc{Dep}(R, D) = 1$ indicates that $D$ \emph{totally depends} on $R$, meaning all objects are classified without ambiguity.
  \item $0 < \textsc{Dep}(R, D) < 1$ indicates \emph{partial dependency}, where only some objects can be certainly classified.
  \item $\textsc{Dep}(R, D) = 0$ indicates that $D$ is \emph{independent} of $R$.
\end{itemize}

\begin{example}\label{depenexaextended}
For the decision system \(H\) (Example~\ref{desyex1}), let the decision attribute be \textbf{Hire}. Using the positive regions from Example~\ref{posregex} and Eq.~\eqref{depnformu}, the dependency degrees are summarized in Table~\ref{tab:depenresults}.

\begin{table}[h!]
\centering
\caption{Dependency degrees of the decision attribute on various subsets of conditional attributes}
\label{tab:depenresults}
\footnotesize
\begin{tabular}{l|ccccccccc}
 & \rotatebox{90}{$\{\text{Exp}\}$} 
 & \rotatebox{90}{$\{\text{Edu}\}$} 
 & \rotatebox{90}{$\{\text{Test}\}$} 
 & \rotatebox{90}{$\{\text{Comm}\}$} 
 & \rotatebox{90}{$\{\text{Reloc}\}$} 
 & \rotatebox{90}{$\{\text{Test},\,\text{Comm}\}$} 
 & \rotatebox{90}{$\{\text{Edu},\,\text{Comm}\}$} 
 & \rotatebox{90}{$\{\text{Exp},\,\text{Edu}\}$} 
 & \rotatebox{90}{$\{\text{Test},\,\text{Reloc}\}$} \\ \hline
 $\textsc{Cla}(-,D)$ & 0.000 & 0.286 & 0.214 & 0.214 & 0.000 & 0.571 & 0.571 & 0.571 & 0.500 \\
$\textsc{Rel}(-,D)$ & 0.500 & 0.600 & 0.500 & 0.600 & 0.500 & 0.800 & 0.800 & 0.818 & 0.714 \\
$\textsc{Dir}(-,D)$   & 0.429 & 0.357 & 0.429 & 0.357 & 0.286 & 0.714 & 0.714 & 0.785 & 0.500
\end{tabular}
\end{table}

The results indicate that individual attributes such as \emph{Edu} or \emph{Test} enable partial classification, whereas combinations such as \(\{\text{Test},\text{Comm}\}\), \(\{\text{Edu},\text{Comm}\}\), and \(\{\text{Exp},\text{Edu}\}\) substantially increase the dependency, classifying the majority of cases unambiguously. In particular, the pair \(\{\text{Test},\text{Reloc}\}\) attains $\textsc{Cla}=0.50$ (positive-region size $7$), indicating a strong—but not total—dependency: these two attributes alone classify 7 out of 14 objects without ambiguity.

Let $\textsc{Dep}$ denote a dependency criterion. In practical terms, the dependency degree provides a quantitative measure of the discriminative power of attributes. In feature selection, the objective is to identify a minimal subset $C' \subseteq C$ such that
\[
\textsc{Dep}(C', D) = \textsc{Dep}(C, D),
\]
where $C'$ is called a \emph{reduct} with respect to $\textsc{Dep}$. Reducts offer compact and non-redundant representations of the decision system while preserving its full decision-making capability. Consequently, they constitute the theoretical foundation for dimensionality reduction in RST-based learning systems.
\end{example}
\subsection{Supervised Feature Selection}\label{sec5}

Supervised feature selection aims to identify feature subsets that best predict decision classes by quantifying each attribute’s contribution to the target variable. Many real datasets contain irrelevant or redundant features whose removal improves computational efficiency, generalization, and interpretability. Feature utility is evaluated in terms of relevance to the decision attribute and redundancy with respect to other features~\cite{nahar2013computational}. In RST, supervised reduction relies on dependency-based evaluation functions. Two widely used strategies are forward selection, which incrementally adds features that maximally increase dependency, and backward elimination, which removes features whose deletion does not reduce it. Forward selection is efficient for small reducts, whereas backward elimination is more suitable when redundancy is high but is computationally heavier. These strategies support the construction of reducts—minimal subsets that preserve the system’s classification power.

Let $K = (U, C \cup D)$ denote a decision system, where $U$ is the universe, $C$ the conditional attributes, and $D$ the decision attributes. Let $\textsc{Dep}(\cdot,D)$ denote a generic dependency measure with respect to $D$.

\subsubsection*{Supervised Forward Selection}

The forward selection algorithm based on $\textsc{Dep}$, denoted $\textsc{FDep}$, computes a reduct without exhaustive enumeration of all subsets, thereby reducing complexity relative to brute-force search. It initializes an empty set $R \leftarrow \emptyset$ and iteratively adds the attribute $a \in C \setminus R$ that maximizes $\textsc{Dep}(R \cup \{a\}, D)$. The process terminates when $\textsc{Dep}(R, D) = \textsc{Dep}(C, D)$.

Conceptually, $\textsc{FDep}$ constructs a minimal subset preserving the dependency of the full conditional set, yielding a reduct (or near-reduct in the presence of local optima). Pseudocode is provided in Algorithm~\ref{alg:fdep}.

\begin{algorithm}[htbp]
\caption{Supervised Forward Selection Based on $\textsc{Dep}$ ($\textsc{FDep}$)}
\label{alg:fdep}
\DontPrintSemicolon
\KwIn{$C$: conditional attribute set; $D$: decision attribute set.}
\KwOut{$R$: selected reduct.}
$R \leftarrow \emptyset$\;
\Repeat{$\textsc{Dep}(R,D) = \textsc{Dep}(C,D)$}{
  $T \leftarrow R$\;
  \ForEach{$a \in (C \setminus R)$}{
    \If{$\textsc{Dep}(R \cup \{a\},D) > \textsc{Dep}(T,D)$}{
      $T \leftarrow R \cup \{a\}$\;
    }
  }
  $R \leftarrow T$\;
}
\Return{$R$}\;
\end{algorithm}

Forward selection is efficient because it evaluates only incremental extensions of the current subset, scales well to high-dimensional data, and often yields compact reducts when many features are irrelevant. Its greedy, deterministic nature ensures reproducibility and strong performance with monotonic dependency measures (e.g., classical or ECD). However, greediness may lead to local optima, particularly when features interact synergistically or when the dependency landscape is non-convex. Since added features cannot be removed, the method may produce non-minimal reducts, and weakly monotonic measures (e.g., relative or direct dependency) can further hinder optimal convergence.

\subsubsection*{Supervised Backward Elimination}

The backward elimination algorithm based on $\textsc{Dep}$, denoted $\textsc{BDep}$, adopts the inverse strategy. It initializes $R \leftarrow C$ and iteratively removes any attribute $a \in R$ for which $\textsc{Dep}(R \setminus \{a\}, D) = \textsc{Dep}(C, D)$. Elimination continues until no further attribute can be removed without reducing dependency. Pseudocode is given in Algorithm~\ref{alg:bdep}.

\begin{algorithm}[htbp]
\caption{Supervised Backward Elimination Based on $\textsc{Dep}$ ($\textsc{BDep}$)}
\label{alg:bdep}
\DontPrintSemicolon
\KwIn{$C$: conditional attribute set; $D$: decision attribute set.}
\KwOut{$R$: selected reduct.}
$R \leftarrow C$\;
\Repeat{no attribute can be removed}{
  \ForEach{$a \in R$}{
    \If{$\textsc{Dep}(R \setminus \{a\}, D) = \textsc{Dep}(C, D)$}{
      $R \leftarrow R \setminus \{a\}$\;
    }
  }
}
\Return{$R$}\;
\end{algorithm}

Backward elimination begins with all attributes and removes those whose deletion does not reduce dependency, naturally capturing feature interactions and mitigating some local-optimality issues of forward selection. It is especially effective when many attributes are redundant or when relevance emerges only after others are removed, and with monotonic measures (e.g., classical dependency or ECD) it often yields minimal reducts. Its main drawback is high computational cost, as each removal requires full dependency reevaluation. With non-monotonic measures, premature removals may discard context-dependent features, leading to suboptimal reducts.



\section{Expected Confidence Dependency}\label{sec3}

The classical, relative, and direct dependency measures in Rough Set Theory each quantify the extent to which conditional attributes support the determination of decision classes, but all share a foundational limitation. The \emph{classical} dependency measure evaluates the proportion of objects that can be \emph{certainly} classified based on the positive region determined by the conditional attributes. The \emph{relative} dependency measure assesses the discriminatory capability of attributes by comparing the granularity of partitions induced by the conditional attributes with that of the combined conditional--decision attributes. The more recent \emph{direct} dependency measure quantifies dependency through the number of distinct equivalence classes induced when the conditional attributes are paired with the decision attribute.  

Despite their differences, all three measures operate in a fundamentally binary manner: an equivalence class is treated as either fully consistent with a single decision class or inconsistent altogether. As a consequence, these criteria cannot adequately capture \emph{partial} or \emph{graded} dependencies—situations frequently encountered in real-world data where noise, overlap, or uncertainty weakens the strict correspondence between conditional and decision attributes.

To address this limitation, we propose a probabilistic generalization termed the \emph{Expected Confidence Dependency} (ECD). This measure extends the existing dependency framework by introducing the notion of \emph{classification confidence} for each conditional equivalence class and aggregating these confidences across the universe. ECD therefore provides a continuous, expectation-based quantification of the dependency between conditional and decision attributes, enabling the representation of both perfectly consistent and partially consistent relationships that the classical, relative, and direct measures cannot fully express.

Let $K = (U, C \cup D)$ be a decision system, and let $R \subseteq C$. For an equivalence class $X \in U / R$ and a decision class $Y \in U / D$, the \emph{classification confidence of $X$ with respect to $Y$} is defined as
\begin{equation}\label{confexp}
\mathrm{Conf}(X \to Y) = \frac{|X \cap Y|}{|X|}.
\end{equation}
Intuitively, $\mathrm{Conf}(X \to Y)$ represents the conditional probability that an object belonging to $X$ is also a member of the decision class $Y$. The following basic properties hold:
\[
0 \leq \mathrm{Conf}(X \to Y) \leq 1, 
\qquad 
\sum_{Y \in U / D} \mathrm{Conf}(X \to Y) = 1.
\]

Because a conditional block $X$ may overlap with multiple decision classes, the maximum achievable certainty within $X$ is captured by
\[
\mathrm{Conf}(X \to D) := \max_{Y \in U / D} \mathrm{Conf}(X \to Y),
\]
which corresponds to the confidence associated with the majority decision in $X$. This quantity reflects the optimal local classification accuracy attainable when only the attributes in $R$ are used.

To aggregate these local confidences over all conditional equivalence classes, we define the \emph{Total Weighted Confidence Mass} of $R$ with respect to $D$ as
\begin{equation}\label{eq:weightedmass}
W(R, D) =
\sum_{X \in U / R}
|X| \cdot
\mathrm{Conf}(X \to D).
\end{equation}
Here, the term $|X| \cdot \mathrm{Conf}(X \to D)$ represents the expected number of objects in block $X$ that can be \emph{optimally classified}—that is, correctly assigned to their majority decision class under a majority-voting rule. Summing these contributions across all blocks yields the total number of objects in $U$ expected to be correctly classified using only the information contained in $R$.

The \emph{Expected Confidence Dependency} of $D$ on $R$ is then defined as
\begin{equation}\label{ecdform}
\textsc{Exp}(R, D) = \frac{W(R, D)}{|U|}.
\end{equation}
Normalization by $|U|$ guarantees that $\textsc{Exp}(R, D) \in [0,1]$, allowing it to be interpreted directly as the \emph{expected classification accuracy} achieved when decisions rely solely on the attributes in $R$.

%
%

Algorithm~\ref{alg:ecd} formalizes the computation of $\textsc{Exp}(C, D)$, outlining a systematic procedure for partitioning the universe, calculating confidence values, and aggregating the overall dependency.

\begin{algorithm}[H]
\footnotesize
\caption{Computation of the Expected Confidence Dependency ($ECD$)}
\label{alg:ecd}
\KwIn{
Decision system $K = (U, C \cup D)$;\\
Conditional attribute set $C$;\\
Decision attribute set $D$.
}
\KwOut{
Expected Confidence Dependency value $\textsc{Exp}(C, D)$.
}

\BlankLine
\textbf{Step 1:} Partition the universe $U$ according to $C$:
\[
U / C = \{ X_1, X_2, \dots, X_s \}.
\]

\textbf{Step 2:} Partition $U$ according to $D$:
\[
U / D = \{ Y_1, Y_2, \dots, Y_t \}.
\]

\textbf{Step 3:} Initialize $W(C, D) \leftarrow 0$.

\textbf{Step 4:}
\ForEach{conditional block $X_i \in U / C$}{
    \textbf{(a)} For each decision block $Y_j \in U / D$, compute:
    \[
    \mathrm{Conf}(X_i \to Y_j) = \frac{|X_i \cap Y_j|}{|X_i|}.
    \]
    \textbf{(b)} Determine the maximum confidence for $X_i$:
    \[
    \mathrm{Conf}(X_i \to D) = \max_{1 \le j \le t} \mathrm{Conf}(X_i \to Y_j).
    \]
    \textbf{(c)} Update:
    \[
    W(C, D) \leftarrow W(C, D) + |X_i| \cdot \mathrm{Conf}(X_i \to D).
    \]
}

\textbf{Step 5:} Compute the ECD:
\[
\textsc{Exp}(C, D) = \frac{W(C, D)}{|U|}.
\]

\textbf{Step 6:} \Return $\textsc{Exp}(C, D)$.
\end{algorithm}

\vspace{5ex}

The computational complexity of Algorithm~\ref{alg:ecd} depends on the numbers of conditional and decision partitions, denoted by $s = |U / C|$ and $t = |U / D|$, respectively. For each conditional block $X_i$, the algorithm computes up to $t$ intersections and confidence values, yielding a total complexity of $\mathcal{O}(s \times t)$. Since $s, t \leq |U|$, the worst-case complexity is bounded by $\mathcal{O}(|U|^2)$. In practice, however, using efficient data structures such as hash tables reduces this to approximately $\mathcal{O}(|U| \cdot |C|)$, particularly when $|C| \ll |U|$. This optimization ensures scalability and makes ECD feasible for large-scale data analysis.

The overall computation procedure is depicted schematically in Fig.~\ref{fig:ecd_flowchart}.

\begin{figure}[H]
\centering
\tikzstyle{startstop} = [rectangle, rounded corners, minimum width=1.5cm, minimum height=0.6cm, text centered, draw=black, fill=gray!20, font=\scriptsize]
\tikzstyle{process} = [rectangle, minimum width=1.8cm, minimum height=0.6cm, text centered, draw=black, fill=blue!10, font=\scriptsize]
\tikzstyle{decision} = [diamond, aspect=1.5, minimum width=1.2cm, minimum height=0.6cm, text centered, draw=black, fill=green!10, font=\scriptsize]
\tikzstyle{arrow} = [thick,->,>=stealth]

\begin{tikzpicture}[node distance=0.5cm and 0.5cm]

\node (start) [startstop] {Start};
\node (input) [process, right=of start] {Input $K$};
\node (partitionC) [process, right=of input] {$U/C = \{X_i\}$};
\node (partitionD) [process, right=of partitionC] {$U/D = \{Y_j\}$};
\node (init) [process, right=of partitionD] {$W(C, D) \leftarrow 0$};

\node (loopC) [process, below=of init] {For $X_i \in U/C$};
\node (loopD) [process, left=of loopC] {For $Y_j \in U/D$};
\node (computeConf) [process, left=of loopD] {Compute $\mathrm{Conf}$};
\node (maxConf) [process, left=of computeConf] {$\max_j \mathrm{Conf}$};
\node (updateTotal) [process, left=of maxConf] {Add $|X_i| \cdot \max$};
\node (decision) [decision, below=of updateTotal] {More $X_i$?};
\node (computeECD) [process, right=of decision, xshift=0.3cm] {Compute $ECD$};
\node (output) [process, right=of computeECD, xshift=0.2cm] {Output $ECD$};
\node (stop) [startstop, right=of output, xshift=0.2cm] {End};

\draw [arrow] (start) -- (input);
\draw [arrow] (input) -- (partitionC);
\draw [arrow] (partitionC) -- (partitionD);
\draw [arrow] (partitionD) -- (init);
\draw [arrow] (init) -- (loopC);
\draw [arrow] (loopC) -- (loopD);
\draw [arrow] (loopD) -- (computeConf);
\draw [arrow] (computeConf) -- (maxConf);
\draw [arrow] (maxConf) -- (updateTotal);
\draw [arrow] (updateTotal) -- (decision);
\draw [arrow] (decision) -- node[above, font=\scriptsize] {No} (computeECD);
\draw [arrow] (decision.south) |- ++(0,-0.2) -| node[below, font=\scriptsize, pos=0.25] {Yes} (loopC.south);
\draw [arrow] (computeECD) -- (output);
\draw [arrow] (output) -- (stop);

\end{tikzpicture}
\caption{Flowchart for computing the ECD.}
\label{fig:ecd_flowchart}
\end{figure}
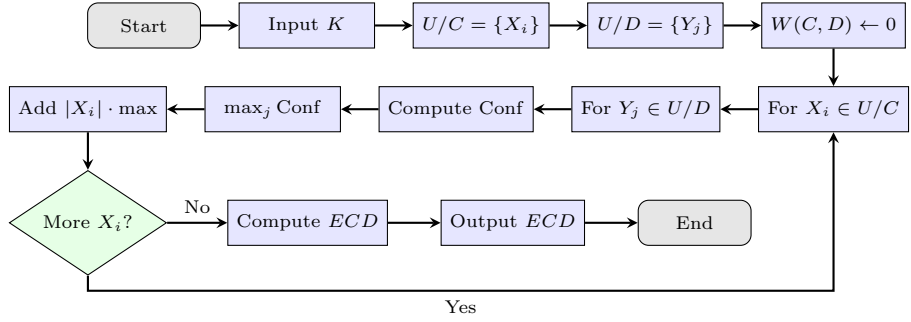

To illustrate the computational procedure and interpretive advantages of the proposed $ECD$ measure, we apply it to the decision system $H$ introduced earlier in Example~\ref{desyex1}. For each conditional attribute, we evaluate the degree of dependency of the decision attribute $D$ on a given subset $R$.

\begin{example}\label{ecdexample}
Consider again the decision system $H$ from Example~\ref{desyex1}. We begin by computing $ECD$ for the single conditional attribute \textit{Test}. Let $R=\{\text{Test}\}$. From the induced partition shown in Example~\ref{indecrelex}, we obtain:
\[
U/R = 
\Big\{
X_{1}=\{x_4, x_7, x_{10}\},\;
X_{2}=\{x_1, x_3, x_6, x_9, x_{11}, x_{13}, x_{14}\},\;
X_{3}=\{x_2, x_5, x_8, x_{12}\}
\Big\}.
\]

\noindent
For each block $X_i$, we identify the majority decision and compute its classification confidence:
\[
\mathrm{Conf}(X_1 \to D)=1,\qquad
\mathrm{Conf}(X_2 \to D)=\frac{4}{7}\approx 0.571,\qquad
\mathrm{Conf}(X_3 \to D)=\frac{3}{4}=0.75.
\]

\noindent
The total weighted confidence mass is therefore
\[
W(R,D) = 3(1) + 7(0.571) + 4(0.75)
        = 3 + 3.997 + 3
        \approx 10.0.
\]
Hence,
\[
\textsc{Exp}(R,D) = \frac{W(R,D)}{|U|} = \frac{10.0}{14} \approx 0.714.
\]

\noindent
Repeating the same procedure for selected attribute subsets (with partitions provided in Example~\ref{depenexaextended}) yields the results summarized in Table~\ref{tab:ecd_vs_gamma}.

\begin{table}[H]
\footnotesize
\centering
\caption{Computation of $ECD$ for selected attribute subsets.}
\label{tab:ecd_vs_gamma}
\renewcommand{\arraystretch}{1.2}
\begin{tabular}{c|ccccccccc}
 & \rotatebox{90}{$\{\text{Exp}\}$} 
 & \rotatebox{90}{$\{\text{Edu}\}$} 
 & \rotatebox{90}{$\{\text{Test}\}$} 
 & \rotatebox{90}{$\{\text{Comm}\}$} 
 & \rotatebox{90}{$\{\text{Reloc}\}$} 
 & \rotatebox{90}{$\{\text{Test},\,\text{Comm}\}$} 
 & \rotatebox{90}{$\{\text{Edu},\,\text{Comm}\}$} 
 & \rotatebox{90}{$\{\text{Exp},\,\text{Edu}\}$} 
 & \rotatebox{90}{$\{\text{Test},\,\text{Reloc}\}$} \\ \hline
$\textsc{Exp}(-,D)$ & 0.571 & 0.785 & 0.714 & 0.714 & 0.571 & 0.857 & 0.857 & 0.857 & 0.857 
\end{tabular}
\end{table}

These results reveal several noteworthy behaviors of the ECD measure. First, ECD consistently exceeds the classical dependency value because it incorporates partial consistency through majority-confidence weighting, whereas the classical measure counts only fully consistent equivalence classes. The attributes \textit{Edu}, \textit{Test}, and \textit{Comm} demonstrate this most clearly, where ECD captures substantial predictive structure that the classical measure overlooks.

Second, for informative combinations such as $\{\text{Test},\text{Comm}\}$, $\{\text{Edu},\text{Comm}\}$, and $\{\text{Exp},\text{Edu}\}$, the ECD values reach their highest nontrivial magnitude (0.857), aligning closely with the relative and direct dependency measures. This indicates that ECD responds appropriately to increased discrimination power as multiple attributes jointly refine the decision partition.

Finally, compared with relative and direct dependency, ECD exhibits smoother and more discriminative variation across attribute subsets. Because it directly reflects the distribution of decision outcomes within each equivalence block, rather than relying solely on partition cardinality ratios, ECD provides finer gradations in partially consistent regions and yields more stable values across similar partitions. These observations demonstrate that ECD encapsulates the underlying decision structure more comprehensively than classical dependency, while offering a robust, expectation-based alternative to the relative and direct measures. Consequently, ECD emerges as a reliable and accuracy-oriented criterion for feature selection in rough set–based data analysis.

Figure~\ref{fig:ecd_heatmap} illustrates the dependency degrees for various attribute subsets in the decision system from Example~\ref{depenexaextended}. As seen in the heatmap, the classical measure (\textsc{Cla}) consistently reports lower dependency values, capturing only fully consistent equivalence classes, whereas the relative (\textsc{Rel}) and direct (\textsc{Dir}) measures yield higher values that reflect partial contributions of attributes. The ECD measure surpasses all three, particularly for informative attribute combinations such as $\{\text{Test},\text{Comm}\}$, $\{\text{Edu},\text{Comm}\}$, and $\{\text{Exp},\text{Edu}\}$, highlighting its ability to integrate majority-confidence weighting and account for partial consistency within each block. The heatmap reveals that ECD provides smoother and more discriminative gradations across attribute subsets, indicating a more nuanced reflection of the decision structure and better capturing the predictive power of individual and combined attributes compared to classical, relative, and direct dependency measures.

\begin{figure}[H]
\centering
\includegraphics[scale=0.45]{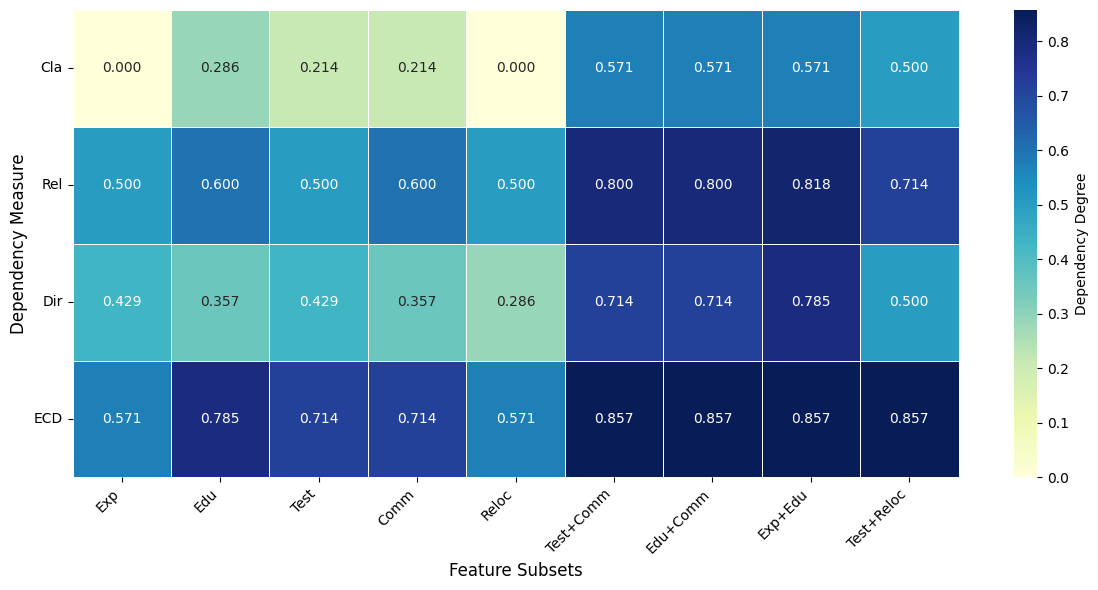} 
\caption{Heatmap of dependency values for selected attribute subsets.}
\label{fig:ecd_heatmap}
\end{figure}

\end{example}


\section{Theoretical Properties of the Expected Confidence Dependency}\label{sec4}

This section formally establishes the principal theoretical foundations of the ECD measure. It rigorously demonstrates that  ECD  satisfies essential mathematical properties expected of a dependency function within RST. These include normalization, consistency, monotonicity, and invariance under structural and labeling transformations. Furthermore, the section provides interpretive, computational, and stability analyses, linking  ECD  to practical classification behavior and robustness. Collectively, these results confirm that  ECD  is both theoretically sound and computationally reliable for data-driven dependency assessment.

\begin{theorem}\label{fecdpro}
  Let $K = (U, C \cup D)$ be a decision system. The following properties hold:
  \begin{enumerate}
  \item [(1) \namedlabel{fecdpro1}{(1)}] $0 \leq \textsc{Exp}(C, D) \leq 1$ for all $C \subseteq A$;
  \item [(2) \namedlabel{fecdpro2}{(2)}] $\textsc{Exp}(C, D) = 1$ if and only if every equivalence class induced by $C$ is fully consistent with a decision class;
  \item [(3) \namedlabel{fecdpro3}{(3)}] $\textsc{Exp}(C, D) \geq \textsc{Cla}(C, D)$, where $\textsc{Cla}(C, D)$ is the classical dependency, with equality if and only if all equivalence classes are consistent;
  \item [(4) \namedlabel{fecdpro4}{(4)}] $\textsc{Exp}(C, D)$ is monotonic with respect to $C$, i.e.,
    \[
    C_1 \subseteq C_2 \implies \textsc{Exp}(C_1, D) \leq \textsc{Exp}(C_2, D).
    \]
\end{enumerate}
\end{theorem}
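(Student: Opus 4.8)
The plan is to treat parts \ref{fecdpro1}--\ref{fecdpro3} together, since they all rest on the elementary per-block bound $0 \le |X|\cdot\mathrm{Conf}(X\to D)\le |X|$, and then handle the monotonicity in \ref{fecdpro4} by a refinement argument carried out at the level of individual conditional blocks.

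For \ref{fecdpro1} I would note that for every block $X\in U/C$ one has $0\le\mathrm{Conf}(X\to D)\le 1$, being a maximum of ratios $|X\cap Y|/|X|$ each lying in $[0,1]$; hence $0\le |X|\cdot\mathrm{Conf}(X\to D)\le |X|$. Summing over $U/C$ and using $\sum_{X\in U/C}|X|=|U|$ gives $0\le W(C,D)\le |U|$, and dividing by $|U|$ yields the bound. For \ref{fecdpro2}, the same term-by-term bound shows that $W(C,D)=|U|$ forces $|X|\cdot\mathrm{Conf}(X\to D)=|X|$ in every block, i.e. $\mathrm{Conf}(X\to D)=1$, which is exactly the statement that each block lies entirely inside one decision class; the converse is a direct substitution.

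The key link for \ref{fecdpro3} is the identity $|X|\cdot\mathrm{Conf}(X\to D)=|X\cap Y^{\ast}_X|$, where $Y^{\ast}_X$ is a majority decision class of $X$, together with the observation that $X\subseteq\mathrm{POS}_C(D)$ precisely when $\mathrm{Conf}(X\to D)=1$. I would split the sum defining $W(C,D)$ into consistent blocks, whose union is $\mathrm{POS}_C(D)$ and which contribute $|X|$ each, and inconsistent blocks, each contributing a strictly positive amount since every nonempty $X$ meets at least one decision class. This gives $W(C,D)=|\mathrm{POS}_C(D)|+\sum_{\text{inconsistent }X}|X|\cdot\mathrm{Conf}(X\to D)\ge|\mathrm{POS}_C(D)|$, hence $\textsc{Exp}(C,D)\ge\textsc{Cla}(C,D)$, with equality exactly when no inconsistent block exists, because each such block contributes a positive term to the remainder sum.

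The main obstacle is \ref{fecdpro4}, the monotonicity. The central fact is that $C_1\subseteq C_2$ makes $U/C_2$ a refinement of $U/C_1$: each block $X\in U/C_1$ decomposes as a disjoint union $X=\bigsqcup_i X_i$ of blocks $X_i\in U/C_2$. I would prove the per-block inequality $|X|\cdot\mathrm{Conf}(X\to D)\le\sum_i|X_i|\cdot\mathrm{Conf}(X_i\to D)$ and then sum it over $U/C_1$. To establish it, fix a majority class $Y^{\ast}_X$ of $X$, so the left side equals $|X\cap Y^{\ast}_X|=\sum_i|X_i\cap Y^{\ast}_X|$; for each subblock, $\mathrm{Conf}(X_i\to D)=\max_Y |X_i\cap Y|/|X_i|\ge |X_i\cap Y^{\ast}_X|/|X_i|$, whence $|X_i|\cdot\mathrm{Conf}(X_i\to D)\ge|X_i\cap Y^{\ast}_X|$, and summing recovers the inequality. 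The delicate point to state carefully is that the majority class of $X$ need not coincide with the majority class of any particular $X_i$: refinement can only help, because each subblock is free to select its own (at least as good) local majority, so the bound is driven by comparing each subblock's local optimum against the single fixed class $Y^{\ast}_X$.
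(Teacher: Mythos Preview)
Your proposal is correct and follows essentially the same approach as the paper: both arguments reduce to the integer-count identity $|X|\cdot\mathrm{Conf}(X\to D)=\max_{Y}|X\cap Y|$, use the per-block bounds $0\le\max_Y|X\cap Y|\le|X|$ for \ref{fecdpro1}--\ref{fecdpro3}, and prove \ref{fecdpro4} by the refinement inequality $|X\cap Y^{\ast}_X|=\sum_i|X_i\cap Y^{\ast}_X|\le\sum_i\max_Y|X_i\cap Y|$. Your treatment of the equality case in \ref{fecdpro3} (each inconsistent block contributes a strictly positive remainder) is in fact slightly more explicit than the paper's.
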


\begin{proof}
Recall the alternative expression for the Expected Confidence Dependency:
\[
\begin{array}{cc}
  \textsc{Exp}(C,D) & = \frac{1}{|U|}\sum_{[x]_C\in U/C} |[x]_C|\cdot \max_{d\in D}\mathrm{Conf}([x]_C\to d) \\
   & = \frac{1}{|U|}\sum_{[x]_C\in U/C} \max_{d\in D} |[x]_C\cap d|,
\end{array}
\]
since $\mathrm{Conf}([x]_C\to d)=\dfrac{|[x]_C\cap d|}{|[x]_C|}$ and hence $|[x]_C|\cdot\max_{d}\mathrm{Conf}([x]_C\to d)=\max_{d}|[x]_C\cap d|$.

\item [\ref{fecdpro1}:] For every equivalence class $[x]_C$, $0\le |[x]_C\cap d|\le |[x]_C|$ for all $d\in D$, and therefore $0\le \max_{d}|[x]_C\cap d|\le |[x]_C|$. Summing over all classes gives
\[
0 \le \sum_{[x]_C}\max_{d}|[x]_C\cap d| \le \sum_{[x]_C}|[x]_C| = |U|.
\]
Dividing by $|U|$ yields $0\le \textsc{Exp}(C,D)\le 1$.

\item [\ref{fecdpro2}:] Suppose $\textsc{Exp}(C,D)=1$. Then
\[
\sum_{[x]_C}\max_{d}|[x]_C\cap d| = |U|.
\]
Equality holds only if $\max_{d}|[x]_C\cap d|=|[x]_C|$ for all $[x]_C$, meaning each equivalence class is entirely contained within a single decision class. The converse follows analogously.

\item [\ref{fecdpro3}:]  
Let $U/D=\{D_1,\dots,D_k\}$ denote the decision partition and $\mathrm{POS}_C(D)=\bigcup_i\underline{C}(D_i)$ the positive region. By definition,
\[
\textsc{Cla}(C,D)=\frac{|\mathrm{POS}_C(D)|}{|U|}=\frac{1}{|U|}\sum_{\substack{[x]_C\\ [x]_C\subseteq D_i}} |[x]_C|.
\]
For each fully consistent block, $\max_{d}|[x]_C\cap d|=|[x]_C|$, whereas for inconsistent ones, $\max_{d}|[x]_C\cap d|<|[x]_C|$. Consequently,
\[
\sum_{[x]_C}\max_{d}|[x]_C\cap d| \ge \sum_{\substack{[x]_C\\ [x]_C\subseteq D_i}} |[x]_C| = |\mathrm{POS}_C(D)|.
\]
Dividing by $|U|$ gives $\textsc{Exp}(C,D)\ge \textsc{Cla}(C,D)$. Equality holds precisely when all equivalence classes are consistent.

\item [\ref{fecdpro4}:]  
If $C_1\subseteq C_2$, the partition induced by $C_2$ refines that of $C_1$. For each block $B\in U/C_1$, let $B_1,\dots,B_m$ denote its subblocks under $C_2$. Selecting $d^*\in\arg\max_{d}|B\cap d|$, we have
\[
|B\cap d^*| = \sum_{j=1}^m |B_j\cap d^*| \le \sum_{j=1}^m \max_{d}|B_j\cap d|.
\]
Summing over all $B\in U/C_1$ and normalizing by $|U|$ yields $\textsc{Exp}(C_1,D)\le \textsc{Exp}(C_2,D)$, establishing monotonicity.
\end{proof}


We next demonstrate the \emph{partition locality} property of \(ECD\), which asserts that its value depends solely on the partition of the universe induced by the condition attributes, rather than on the specific attributes themselves. This property highlights the structural invariance of \(ECD\) with respect to equivalent attribute representations.

\begin{theorem}\label{prop:partition_locality}
Let $K=(U,C\cup D)$ be a decision system. If two attribute subsets $C_1,C_2\subseteq C$ induce the same partition of $U$, i.e.,
\[
U/C_1=U/C_2,
\]
then
\[
\textsc{Exp}(C_1,D)=\textsc{Exp}(C_2,D).
\]
Thus, $\textsc{Exp}(C,D)$ depends solely on the partition induced by $C$.
\end{theorem}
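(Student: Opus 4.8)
The plan is to reduce everything to the partition-based closed form for $\textsc{Exp}$ that was derived at the outset of the proof of Theorem~\ref{fecdpro}. Recall that
\[
\textsc{Exp}(C,D) = \frac{1}{|U|}\sum_{[x]_C\in U/C} \max_{d\in D} |[x]_C\cap d|.
\]
The crucial observation is that every ingredient on the right-hand side is a function of the two partitions $U/C$ and $U/D$ alone: the index set of the sum is exactly the collection of blocks $U/C$, while each summand $\max_{d\in D}|[x]_C\cap d|$ is computed from a block $[x]_C$ and the decision classes $d\in U/D$ by set-theoretic intersection and cardinality. No reference is made to the attribute labels in $C$; the symbols used to describe or generate the equivalence relation play no role once the partition is fixed.

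First I would make this dependence explicit by introducing an auxiliary functional $\Phi$ defined on partitions: for any partition $\mathcal{P}$ of $U$, set
\[
\Phi(\mathcal{P}) = \frac{1}{|U|}\sum_{X\in\mathcal{P}} \max_{d\in U/D} |X\cap d|,
\]
so that $\textsc{Exp}(C,D)=\Phi(U/C)$ by the displayed identity. The decision partition $U/D$ is fixed throughout and enters $\Phi$ as a constant. This recasts the claim as the statement that $\Phi$ is a genuine, well-defined function of its single partition argument.

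Then I would invoke the hypothesis $U/C_1=U/C_2$. Because $\Phi$ takes a partition as its sole variable input, equal partitions are mapped to equal values; hence $\Phi(U/C_1)=\Phi(U/C_2)$, and substituting back gives $\textsc{Exp}(C_1,D)=\textsc{Exp}(C_2,D)$. The closing assertion of the statement, that $\textsc{Exp}(C,D)$ depends solely on the partition induced by $C$, is then precisely the observation that $\textsc{Exp}(C,D)=\Phi(U/C)$ factors through the partition map $C\mapsto U/C$.

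I do not anticipate any genuine obstacle: the result is structural rather than computational, and the entire content lies in recognizing that the defining formula never consults the attributes beyond the equivalence relation they induce. The only point requiring care is notational, namely ensuring that the closed form borrowed from Theorem~\ref{fecdpro} is read for an arbitrary conditional subset and not tacitly specialized to the full set $C$. This is immediate, since that derivation used only the generic identities $\mathrm{Conf}([x]_C\to d)=|[x]_C\cap d|/|[x]_C|$ and $|[x]_C|\cdot\max_{d}\mathrm{Conf}([x]_C\to d)=\max_{d}|[x]_C\cap d|$, which hold verbatim for any $C_1,C_2\subseteq C$.
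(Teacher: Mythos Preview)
Your proposal is correct and follows essentially the same approach as the paper: both arguments observe that the closed form $\textsc{Exp}(C,D)=\frac{1}{|U|}\sum_{B\in U/C}\max_{d\in U/D}|B\cap d|$ depends only on the blocks of $U/C$ and their intersections with the decision classes, so equal partitions yield equal values. Your introduction of the auxiliary functional $\Phi$ is a clean notational device that makes the factorization through the partition map explicit, but the underlying content is identical to the paper's direct argument.
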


\begin{proof}
Let the common partition be
\[
\mathcal{B}=U/C_1=U/C_2=\{B_1,\dots,B_s\}.
\]
For any attribute subset $C'$ inducing partition $\mathcal{B}$,
\[
\textsc{Exp}(C',D)=\frac{1}{|U|}\sum_{B\in \mathcal{B}} |B|\cdot \max_{d\in D}\mathrm{Conf}(B\to d)
=\frac{1}{|U|}\sum_{B\in \mathcal{B}} \max_{d\in D}|B\cap d|.
\]
Since this sum depends only on the blocks and their intersections with decision classes, and these are identical for $C_1$ and $C_2$, it follows that $\textsc{Exp}(C_1,D)=\textsc{Exp}(C_2,D)$.
\end{proof}


We now establish the \emph{invariance under relabeling} property of \(ECD\), which ensures that the measure remains invariant under any consistent renaming of attribute values or decision classes. This invariance underscores that \(ECD\) reflects only the underlying informational structure, independent of labeling conventions.

\begin{theorem}\label{prop:invariance_relabel}
Let $K=(U,C\cup D)$ be a decision system.  
\begin{enumerate}
  \item [(1) \namedlabel{prop:invariance_relabel1}{(1)}] Let $\{\pi_a:V_a\to V'_a\}_{a\in C}$ be bijections on the value domains of the conditional attributes, defining a relabeled system $K'=(U,C\cup D)$ with identical decision attributes. Then $U/\mathrm{Ind}_{K'}(C)=U/\mathrm{Ind}_{K}(C)$, and consequently,
  \[
  ECD_{K'}(C,D)=ECD_{K}(C,D).
  \]
  \item [(2) \namedlabel{prop:invariance_relabel2}{(2)}] Let $\sigma:D\to D$ be a permutation of decision-class labels, inducing a relabeled system $K''$. Then
  \[
  ECD_{K''}(C,D)=ECD_{K}(C,D).
  \]
\end{enumerate}
Hence, relabeling attribute values (preserving equality) or permuting decision-class names leaves $ECD$ unchanged.
\end{theorem}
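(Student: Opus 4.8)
The plan is to reduce both parts to the partition-locality principle of Theorem~\ref{prop:partition_locality}, together with the single observation that a \emph{bijective} relabeling of any attribute's value domain preserves the induced indiscernibility partition. Once that observation is in hand, neither part requires recomputing $\textsc{Exp}$ from scratch.

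For part~\ref{prop:invariance_relabel1} I would first establish $U/\mathrm{Ind}_{K'}(C)=U/\mathrm{Ind}_{K}(C)$. Writing the relabeled conditional attributes as $a'(x)=\pi_a(a(x))$, the key step is that for each $a\in C$ and each pair $x,y\in U$, injectivity of $\pi_a$ gives $a'(x)=a'(y)\iff a(x)=a(y)$. Intersecting over $a\in C$ yields $\mathrm{Ind}_{K'}(C)=\mathrm{Ind}_{K}(C)$, so the two systems induce literally the same partition. Since the decision attributes—and hence the decision partition $U/D$ together with all intersections $|B\cap d|$—are identical in $K$ and $K'$, Theorem~\ref{prop:partition_locality} applies verbatim and gives $ECD_{K'}(C,D)=ECD_{K}(C,D)$.

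For part~\ref{prop:invariance_relabel2} I would run the same bijectivity argument on the decision side. The permutation $\sigma$ replaces each object's decision value $d(x)$ by $\sigma(d(x))$; because $\sigma$ is a bijection, the identical injectivity computation gives $U/\mathrm{Ind}_{K''}(D)=U/\mathrm{Ind}_{K}(D)$, so the family of decision blocks $\{D_1,\dots,D_k\}$ is the same collection of subsets of $U$, merely renamed. As $C$ is untouched, every conditional block $[x]_C$ is also unchanged. Invoking the closed form
\[
\textsc{Exp}(C,D)=\frac{1}{|U|}\sum_{[x]_C\in U/C}\max_{d\in D}|[x]_C\cap d|
\]
from the proof of Theorem~\ref{fecdpro}, the inner maximum ranges over the same collection of intersection cardinalities $\{|[x]_C\cap D_i|\}_{i=1}^{k}$ in both $K$ and $K''$. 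Since the maximum of a finite set is independent of how its members are indexed, each summand is preserved, and therefore $W$ and $ECD$ are unchanged.

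The routine core of both parts is identical: injectivity of the relabeling map forces the indiscernibility relation, and hence the partition, to be preserved. The only point that warrants explicit care is the final step of part~\ref{prop:invariance_relabel2}—one must argue that $\textsc{Exp}$ depends on the decision side \emph{only} through the unordered multiset of block--class intersection sizes, so that a permutation of the class names cannot alter the per-block maximum. Making this index-independence explicit, rather than tacitly assuming it, is the main (though still very light) obstacle.
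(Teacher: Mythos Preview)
Your proposal is correct and follows essentially the same route as the paper: for part~\ref{prop:invariance_relabel1} both you and the paper use injectivity of each $\pi_a$ to obtain $\mathrm{Ind}_{K'}(C)=\mathrm{Ind}_{K}(C)$ and then conclude via the fact that $\textsc{Exp}$ depends only on the resulting partition and the unchanged decision data; for part~\ref{prop:invariance_relabel2} both arguments observe that the multiset $\{|C_i\cap d|:d\in D\}$ is preserved under $\sigma$, so each per-block maximum is unchanged. The only cosmetic difference is that you route part~\ref{prop:invariance_relabel1} through Theorem~\ref{prop:partition_locality} explicitly, whereas the paper argues directly from the closed form; both are equally valid.
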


\begin{proof}
\ref{prop:invariance_relabel1}: For any $x,y\in U$ and $a\in C$,
\[
a(x)=a(y)\iff \pi_a(a(x))=\pi_a(a(y)),
\]
implying $\mathrm{Ind}_{K'}(C)=\mathrm{Ind}_{K}(C)$. Thus, $U/C$ and all block–decision intersections remain identical, ensuring $ECD_{K'}(C,D)=ECD_K(C,D)$.

\ref{prop:invariance_relabel2}: Under $\sigma$, each block–decision count satisfies
\[
|C_i \cap \sigma(d)|_{K''} = |C_i \cap d|_{K}.
\]
As the multiset $\{|C_i\cap d|:d\in D\}$ is preserved, $\max_{d}|C_i\cap d|$ remains unchanged. Summing and normalizing yield $ECD_{K''}(C,D)=ECD_K(C,D)$.
\end{proof}


The next proposition demonstrates the \emph{additivity over disjoint universes} property that if the universe is partitioned into non-overlapping subsets, the overall \(ECD\) equals the size-weighted average of the measures computed on each subuniverse.

\begin{theorem}\label{prop:additivity}
Let $K=(U,C\cup D)$ be a decision system where $U=U_1\uplus U_2$ with $U_1\cap U_2=\varnothing$. Denote the induced subsystems as $K_{1}=(U_1,C|_{U_1}\cup D|_{U_1})$ and $K_{2}=(U_2,C|_{U_2}\cup D|_{U_2})$. Then
\[
ECD_{U}(C,D)=\frac{|U_1|}{|U|}\,ECD_{U_1}(C|_{U_1},D|_{U_1})+\frac{|U_2|}{|U|}\,ECD_{U_2}(C|_{U_2},D|_{U_2}).
\]
\end{theorem}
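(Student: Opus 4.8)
The plan is to reduce everything to the counting form of the measure that was established at the opening of the proof of Theorem~\ref{fecdpro}, namely
\[
\textsc{Exp}(C,D)=\frac{1}{|U|}\sum_{X\in U/C}\max_{d\in D}|X\cap d|,
\]
together with its analogues on the two subsystems. Multiplying the claimed identity through by $|U|$, it suffices to prove the purely combinatorial statement
\[
\sum_{X\in U/C}\max_{d}|X\cap d|
=\sum_{X_1\in U_1/C|_{U_1}}\max_{d}|X_1\cap d|
+\sum_{X_2\in U_2/C|_{U_2}}\max_{d}|X_2\cap d|,
\]
after which dividing by $|U|$ recovers the weighted-average form. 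Thus the whole argument becomes a bookkeeping exercise over blocks, once the block structure of $U/C$ relative to the splitting $U=U_1\uplus U_2$ is understood.

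The key structural step is to show that the partition $U/C$ decomposes as the disjoint union $(U_1/C|_{U_1})\uplus(U_2/C|_{U_2})$; equivalently, that every equivalence class $X\in U/C$ lies entirely inside $U_1$ or entirely inside $U_2$. This is exactly where the meaning of \emph{disjoint universes} must be pinned down: since $K_1$ and $K_2$ are treated as independent subsystems whose objects are never identified across the boundary, indiscernibility on $U$ is the disjoint union $\mathrm{Ind}_K(C)=\mathrm{Ind}_{K_1}(C)\uplus\mathrm{Ind}_{K_2}(C)$, so no block straddles $U_1$ and $U_2$. I would state this compatibility explicitly as the hypothesis that legitimizes the decomposition (equivalently: each $U_i$ is a union of $C$-classes, or cross-boundary objects are kept discernible), and then record that the nonempty restrictions $X\cap U_1$ are precisely the blocks of $U_1/C|_{U_1}$, and symmetrically for $U_2$.

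Granting this decomposition, the remaining steps are routine. For a block $X\subseteq U_1$, the decision attribute restricted to $U_1$ agrees with $D$ on all of $X$, so $\max_{d}|X\cap d|$ is unchanged whether computed in $K$ or in $K_1$; the same holds for blocks contained in $U_2$. I would then split the single sum over $U/C$ into the sub-sum over blocks contained in $U_1$ and the sub-sum over blocks contained in $U_2$, identify each sub-sum with $|U_i|\cdot \textsc{Exp}_{U_i}(C|_{U_i},D|_{U_i})$, and divide by $|U|$ to obtain the stated convex combination.

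The main obstacle is conceptual rather than computational: the $\max$ operator does \emph{not} distribute over an arbitrary splitting of a block, because the majority decision of $X\cap U_1$ may differ from that of $X\cap U_2$. Consequently, additivity genuinely requires that no class be split across the two subuniverses, and I would take care to invoke the theorem under the compatible-partition reading rather than an arbitrary set-theoretic bipartition. For completeness I would remark that, absent this compatibility, only the subadditive bound $\max_d|X\cap d|\le \max_d|X\cap U_1\cap d|+\max_d|X\cap U_2\cap d|$ survives, which would turn the asserted equality into an inequality and signal that the hypothesis is doing essential work.
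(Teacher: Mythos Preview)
Your proposal is correct and follows essentially the same route as the paper: both arguments pass to the integer-count form $\textsc{Exp}(C,D)=\frac{1}{|U|}\sum_X\max_d|X\cap d|$, assert that every $C$-class lies wholly in $U_1$ or wholly in $U_2$, split the sum accordingly, and identify each sub-sum with $|U_i|\cdot ECD_{U_i}$. The paper simply states the non-straddling property as a fact, whereas you are more explicit that this is the substantive hypothesis (and you correctly observe that without it only subadditivity survives); in that respect your write-up is more careful than the original.
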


\begin{proof}
Let $n=|U|$, $n_1=|U_1|$, and $n_2=|U_2|$ so that $n=n_1+n_2$. Each equivalence class of $U/C$ lies entirely within either $U_1$ or $U_2$, forming partitions $\mathcal{B}_1$ and $\mathcal{B}_2$. Thus,
\[
ECD_{U}(C,D)=\frac{1}{n}\Bigg(\sum_{B\in\mathcal{B}_1}\max_{d\in D}|B\cap d|+\sum_{B\in\mathcal{B}_2}\max_{d\in D}|B\cap d|\Bigg).
\]
Since each sum corresponds to the unnormalized numerator of $ECD$ on $K_1$ and $K_2$, respectively,
\[
ECD_{U}(C,D)=\frac{n_1}{n}ECD_{U_1}(C|_{U_1},D|_{U_1})+\frac{n_2}{n}ECD_{U_2}(C|_{U_2},D|_{U_2}),
\]
establishing the stated additivity property.
\end{proof}


The next result establishes an interpretive connection between the Expected Confidence Dependency (\(ECD\)) and a concrete classification mechanism. Specifically, it demonstrates that \(ECD\) corresponds to the expected accuracy of a simple majority-vote classifier, thereby providing an intuitive probabilistic interpretation of the measure.

\begin{theorem}\label{prop:majority_accuracy}
Let $K=(U,C\cup D)$ be a decision system and let $U/C=\{C_1,\dots,C_s\}$ denote the partition of $U$ induced by $C$. Define the \emph{block-majority classifier} $h_C:U\to D$ by assigning, for each object $x\in U$, the decision label
\[
h_C(x)\in \arg\max_{d\in D} |C_i\cap d|,
\]
where $C_i$ is the block containing $x$, and ties are broken arbitrarily. Then, the classification accuracy of $h_C$ on $U$, defined as
\[
\mathrm{Acc}(h_C) \;:=\; \frac{1}{|U|}\big|\{\, x\in U \mid h_C(x)=\text{true decision of }x\,\}\big|,
\]
is equal to the Expected Confidence Dependency:
\[
\mathrm{Acc}(h_C) \;=\; \textsc{Exp}(C,D).
\]
Hence, $\textsc{Exp}(C,D)$ represents the expected proportion of correct classifications achieved by the majority-vote classifier that assigns to each conditional block its majority decision label.
\end{theorem}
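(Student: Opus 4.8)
The plan is to reduce the statement to a direct counting identity, since both $\mathrm{Acc}(h_C)$ and $\textsc{Exp}(C,D)$ are ultimately averages over $U$ of the same per-block quantity. The crucial observation is that the block-majority classifier $h_C$ is \emph{constant on each block}: every object $x \in C_i$ is assigned the same label $d_i^* \in \arg\max_{d\in D} |C_i \cap d|$, regardless of which object within $C_i$ is evaluated, because $h_C(x)$ depends only on the block $C_i$ containing $x$ and not on $x$ itself.

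First I would fix a block $C_i$ and count the objects it contributes to the numerator of $\mathrm{Acc}(h_C)$. Since $h_C$ labels all of $C_i$ with $d_i^*$, an object $x \in C_i$ is correctly classified precisely when its true decision equals $d_i^*$, i.e. when $x \in C_i \cap d_i^*$. Hence the number of correctly classified objects within $C_i$ is exactly $|C_i \cap d_i^*| = \max_{d\in D}|C_i \cap d|$ by the defining choice of $d_i^*$.

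Next, because $U/C = \{C_1,\dots,C_s\}$ partitions $U$, the correctly classified objects of distinct blocks are disjoint, so the total count is the sum of the per-block contributions:
\[
\big|\{\, x\in U \mid h_C(x)=\text{true decision of }x \,\}\big| = \sum_{i=1}^{s}\max_{d\in D}|C_i \cap d|.
\]
Dividing by $|U|$ and comparing with the alternative expression $\textsc{Exp}(C,D)=\frac{1}{|U|}\sum_{C_i\in U/C}\max_{d\in D}|C_i\cap d|$ already established in the proof of Theorem~\ref{fecdpro} then yields the claimed identity $\mathrm{Acc}(h_C)=\textsc{Exp}(C,D)$.

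The argument is elementary, so I do not expect a substantial analytical obstacle; the one point requiring care is the \emph{tie-breaking convention}. Since $h_C$ resolves ties arbitrarily, I would emphasize that the correct-classification count $|C_i \cap d_i^*| = \max_{d\in D}|C_i\cap d|$ is independent of which maximizer $d_i^*$ is selected, as every element of the argmax realizes the same value $\max_{d\in D}|C_i \cap d|$. Consequently $\mathrm{Acc}(h_C)$ is well-defined and equals $\textsc{Exp}(C,D)$ irrespective of how ties are broken, which is the only subtlety the write-up needs to address explicitly.
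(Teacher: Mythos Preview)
Your proposal is correct and follows essentially the same approach as the paper: both argue that $h_C$ is constant on each block $C_i$, count the correctly classified objects in $C_i$ as $|C_i\cap d_i^*|=\max_{d\in D}|C_i\cap d|$, sum over the partition, normalize by $|U|$, and invoke the integer-count form of $\textsc{Exp}(C,D)$ established earlier. Your explicit remark that the tie-breaking rule does not affect the count is likewise present in the paper's proof.
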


\begin{proof}
Let $n=|U|$. For each conditional block $C_i\in U/C$ and decision label $d\in D$, denote $n_{i,d}=|C_i\cap d|$ as the number of objects in $C_i$ whose true decision is $d$. Clearly, $|C_i|=\sum_{d\in D} n_{i,d}$.

The block-majority classifier $h_C$ assigns all objects in $C_i$ the label $d_i^\ast\in\arg\max_{d\in D} n_{i,d}$. Therefore, the number of correctly classified objects within $C_i$ is
\[
\max_{d\in D} n_{i,d}.
\]
Summing over all blocks yields the total number of correctly classified objects:
\[
\sum_{i=1}^s \max_{d\in D} n_{i,d}.
\]
Consequently, the classification accuracy of $h_C$ is
\[
\mathrm{Acc}(h_C) \;=\; \frac{1}{n}\sum_{i=1}^s \max_{d\in D} n_{i,d}.
\]

By the integer-count representation of $ECD$,
\[
\textsc{Exp}(C,D) \;=\; \frac{1}{|U|}\sum_{C_i\in U/C} \max_{d\in D} |C_i\cap d|
\;=\; \frac{1}{n}\sum_{i=1}^s \max_{d\in D} n_{i,d}.
\]
Comparing these expressions yields $\mathrm{Acc}(h_C)=\textsc{Exp}(C,D)$.  
The result is unaffected by tie-breaking, since all ties yield the same number $\max_d n_{i,d}$ of correctly classified objects within the tied block, leaving the total accuracy invariant.
\end{proof}


We next provide a computational characterization of \(ECD\). The following proposition shows that \(ECD\) can be expressed entirely in terms of the contingency table describing the joint distribution of conditional and decision classes, making its computation direct and data-driven.

\begin{theorem}\label{prop:contingency_computability}
Let $K=(U,C\cup D)$ be a decision system, and let 
\[
U/C=\{C_1,\dots,C_s\}, \qquad U/D=\{D_1,\dots,D_t\}
\]
denote the partitions of the universe induced by the conditional and decision attributes, respectively. Define the contingency table (count matrix) $N=(n_{ij})\in\mathbb{Z}_{\ge 0}^{s\times t}$ as
\[
n_{ij} := |C_i\cap D_j|, \qquad 1\le i\le s, \; 1\le j\le t.
\]
Then, the Expected Confidence Dependency depends solely on the entries of $N$ and is given by
\[
\textsc{Exp}(C,D) \;=\; \frac{1}{|U|}\sum_{i=1}^s \max_{1\le j\le t} n_{ij}.
\]
Thus, $\textsc{Exp}(C,D)$ is completely determined by the contingency table $N$ and can be computed directly from it without additional information.
\end{theorem}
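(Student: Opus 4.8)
The plan is to reduce the statement almost entirely to the integer-count representation of $\textsc{Exp}$ already derived at the opening of the proof of Theorem~\ref{fecdpro}. Recall that there we established
\[
\textsc{Exp}(C,D) = \frac{1}{|U|}\sum_{[x]_C\in U/C}\max_{d\in D}\,|[x]_C\cap d|,
\]
which expresses the measure purely in terms of block--decision intersection cardinalities, with no residual dependence on the confidence ratios or the attribute values themselves. The entire task is therefore to rewrite this sum in terms of the contingency entries $n_{ij}=|C_i\cap D_j|$.

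First I would fix the indexing: the blocks of $U/C$ are exactly $C_1,\dots,C_s$ and the decision classes of $U/D$ are exactly $D_1,\dots,D_t$. The inner maximum $\max_{d\in D}|[x]_C\cap d|$ ranges over decision \emph{classes}, i.e. over the blocks $D_1,\dots,D_t$, so for the block $C_i$ it equals $\max_{1\le j\le t}|C_i\cap D_j|=\max_{1\le j\le t} n_{ij}$. Substituting this identity into the integer-count form immediately yields
\[
\textsc{Exp}(C,D)=\frac{1}{|U|}\sum_{i=1}^s \max_{1\le j\le t} n_{ij},
\]
which is the claimed formula.

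To justify the stronger assertion that $\textsc{Exp}(C,D)$ is \emph{completely determined} by $N$ alone, I would finish by noting that the normalizing factor $|U|$ is itself recoverable from $N$: since $\{C_i\}$ and $\{D_j\}$ are both partitions of $U$, the pairwise intersections $C_i\cap D_j$ partition $U$, whence $|U|=\sum_{i=1}^s\sum_{j=1}^t n_{ij}$. Thus both the numerator and the denominator are functions of the entries of $N$, so no information beyond the contingency table is needed.

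I do not anticipate a genuine obstacle here, as the result is essentially a reformulation: the substantive content was already absorbed into the integer-count identity of Theorem~\ref{fecdpro}. The only point requiring a word of care is the first step—explicitly identifying the abstract maximum over decision values $d\in D$ with the maximum over the concrete blocks $D_1,\dots,D_t$ of the decision partition—since this is precisely what licenses passing from $\max_d|C_i\cap d|$ to $\max_j n_{ij}$; everything else is bookkeeping.
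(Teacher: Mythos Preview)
Your proposal is correct and follows essentially the same approach as the paper: both reduce $\textsc{Exp}(C,D)$ to the integer-count form $\tfrac{1}{|U|}\sum_i \max_j n_{ij}$, the paper by re-deriving it from the confidence ratios and you by invoking the identity already established in Theorem~\ref{fecdpro}. Your additional remark that $|U|=\sum_{i,j} n_{ij}$ is recoverable from $N$ is a nice touch the paper omits, but otherwise the arguments coincide.
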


\begin{proof}
For each conditional block $C_i$ and decision class $D_j$, the confidence is
\[
\mathrm{Conf}(C_i\to D_j)=\frac{|C_i\cap D_j|}{|C_i|}=\frac{n_{ij}}{|C_i|}.
\]
Hence, the maximal confidence for $C_i$ is
\[
\max_{j}\mathrm{Conf}(C_i\to D_j)=\frac{1}{|C_i|}\max_{j} n_{ij}.
\]
Multiplying by $|C_i|$ and summing over all blocks yields
\[
\sum_{i=1}^s |C_i|\cdot \max_{j}\mathrm{Conf}(C_i\to D_j)
= \sum_{i=1}^s \max_{j} n_{ij}.
\]
Normalizing by $|U|$ gives
\[
\textsc{Exp}(C,D)=\frac{1}{|U|}\sum_{i=1}^s \max_{1\le j\le t} n_{ij}.
\]
Since this expression depends only on the entries $n_{ij}$ of the contingency table $N$, the claim follows directly.
\end{proof}


Finally, we analyze the sensitivity of \(ECD\) to perturbations in the decision data. The next result establishes that modifying the decision label of a single object affects the value of \(ECD\) by at most \(1/|U|\), thereby demonstrating the measure’s robustness to minor changes.

\begin{theorem}\label{prop:stability_single_change}
Let $K=(U,C\cup D)$ be a decision system and let $n=|U|$. Suppose $K'$ is obtained from $K$ by changing the decision label of a single object $x\in U$ (while keeping the conditional attributes unchanged). Then
\[
\big|ECD_{K'}(C,D) - ECD_{K}(C,D)\big| \le \frac{1}{n}.
\]
That is, altering the decision label of one object can change $\textsc{Exp}(C,D)$ by no more than $1/|U|$.
\end{theorem}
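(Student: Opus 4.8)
The plan is to reduce everything to the integer-count form of ECD established in Theorem~\ref{prop:contingency_computability}, namely $\textsc{Exp}(C,D) = \frac{1}{n}\sum_{i=1}^s \max_{j} n_{ij}$ with $n_{ij} = |C_i \cap D_j|$, and then track exactly which entries of the contingency table move when a single object's decision label is changed. First I would observe that because the conditional attributes are left untouched, the partition $U/C$ and hence the block collection $\{C_1,\dots,C_s\}$ is identical in $K$ and $K'$; only decision-class memberships change. Let $x\in C_{i_0}$ be the relabeled object, with old label $D_{j_1}$ and new label $D_{j_2}$. Then the only counts that change are in row $i_0$: the entry $n_{i_0,j_1}$ decreases by one and $n_{i_0,j_2}$ increases by one, while every other entry, and every other row, is unchanged.

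Consequently the difference $ECD_{K'}(C,D) - ECD_{K}(C,D)$ equals $\frac{1}{n}\big(\max_j n'_{i_0,j} - \max_j n_{i_0,j}\big)$, since all rows $i\neq i_0$ contribute identically to both sums. The heart of the argument is therefore the elementary claim that altering a single row by $-e_{j_1}+e_{j_2}$ changes its maximum by at most $1$ in absolute value. I would prove this by a two-sided estimate: writing $v$ and $v'$ for the old and new rows, every coordinate satisfies $v_j-1\le v'_j\le v_j+1$, so for the upper bound each $v'_j\le \max_k v_k+1$, giving $\max_j v'_j\le \max_j v_j+1$; and for the lower bound, choosing $j^\ast$ with $v_{j^\ast}=\max_k v_k$ yields $\max_j v'_j\ge v'_{j^\ast}\ge v_{j^\ast}-1=\max_k v_k-1$. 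Combining these gives $\big|\max_j v'_j-\max_j v_j\big|\le 1$, and dividing by $n$ then delivers the stated bound $|ECD_{K'}(C,D)-ECD_K(C,D)|\le 1/n$.

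As for the main obstacle, there is no deep one: the contingency-table formula does all the structural work, and the remaining step is simply the $1$-Lipschitz property of the coordinatewise maximum. The only point requiring mild care is the lower bound in the max-change claim—one must not merely bound coordinatewise differences but also verify that the old maximizer still witnesses a value at least $\max_k v_k-1$ after the perturbation. This is precisely what rules out the (a priori conceivable) scenario in which the decrement at $j_1$ causes the row maximum to collapse by more than one, and it is the step I would state most explicitly in the write-up.
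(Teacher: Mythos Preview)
Your proposal is correct and follows essentially the same approach as the paper: both reduce to the integer-count form of ECD, observe that only the single row $i_0$ of the contingency table is affected (with one entry decremented and one incremented), and then bound the change in that row's maximum by $1$ before dividing by $n$. If anything, your write-up is slightly more explicit than the paper's in justifying the lower bound $\max_j v'_j \ge \max_j v_j - 1$ via the old maximizer, whereas the paper simply asserts $M_{i^\ast}-1 \le M'_{i^\ast} \le M_{i^\ast}+1$ ``since only two entries change by $\pm 1$.''
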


\begin{proof}
Let $U/C=\{C_1,\dots,C_s\}$ be the partition induced by $C$, and let $x\in C_{i^\ast}$ be the object whose decision label changes from $d_{\mathrm{old}}$ to $d_{\mathrm{new}}$. For each $C_i$ and $d\in D$, define
\[
n_{i,d} = |C_i\cap d| \quad \text{(in $K$)}, \qquad
n'_{i,d} = |C_i\cap d| \quad \text{(in $K'$)}.
\]
Then $n'_{i,d}=n_{i,d}$ for all $i\neq i^\ast$ and $d\in D$, while for the affected block $C_{i^\ast}$,
\[
n'_{i^\ast,d_{\mathrm{old}}} = n_{i^\ast,d_{\mathrm{old}}} - 1, \qquad
n'_{i^\ast,d_{\mathrm{new}}} = n_{i^\ast,d_{\mathrm{new}}} + 1,
\]
and $n'_{i^\ast,d}=n_{i^\ast,d}$ for all other $d$.

Let
\[
M_i = \max_{d\in D} n_{i,d}, \qquad M'_i = \max_{d\in D} n'_{i,d}.
\]
For all $i\neq i^\ast$, we have $M'_i=M_i$. For $i=i^\ast$, since only two entries change by $\pm1$, it follows that
\[
M_{i^\ast}-1 \le M'_{i^\ast} \le M_{i^\ast}+1,
\]
implying $|M'_{i^\ast}-M_{i^\ast}| \le 1$.

The numerator of $ECD$ is the sum of block maxima; thus, the total change in the numerator equals $M'_{i^\ast}-M_{i^\ast}$, whose absolute value is at most $1$. Dividing by $n$ yields
\[
\big|ECD_{K'}(C,D) - ECD_{K}(C,D)\big|
= \frac{1}{n}\big|\sum_i M'_i - \sum_i M_i\big|
= \frac{|M'_{i^\ast}-M_{i^\ast}|}{n}
\le \frac{1}{n}.
\]
This confirms that $ECD$ is stable under a single-label modification.
\end{proof}


\section{Experimental Results}\label{sec5}

In this section, the effectiveness of the proposed dependency criterion for supervised feature selection is evaluated by comparing ECD with three established measures (classical, relative, and direct) using both forward and backward selection across four benchmark datasets from the UCI Machine Learning Repository. These datasets span diverse domains and complexity levels, enabling a comprehensive assessment of robustness and consistency. The best results from each selection method are reported in this paper.
All feature selection procedures are implemented in Python within a unified experimental framework. Categorical attributes are one-hot encoded, and the resulting feature sets are processed by the four algorithms. The selected subsets are subsequently used to train a Random Forest Classifier, and performance is assessed through 5-fold stratified cross-validation.
The following evaluation metrics are computed:
\begin{itemize}
\item \textbf{Accuracy}: proportion of correctly classified instances;
\item \textbf{Precision}: proportion of true positives among predicted positives;
\item \textbf{Recall}: proportion of true positives among actual positives;
\item \textbf{F1-score}: harmonic mean of precision and recall.
\end{itemize}
\subsection{Breast Cancer Dataset}

The Breast Cancer dataset (UCI ID 14) is a classical benchmark dataset for supervised classification tasks. It originates from the University Medical Center in Ljubljana, Slovenia, and is publicly available at: \\ \url{https://archive.ics.uci.edu/dataset/14/breast+cancer}.

The dataset consists of 286 instances and 9 features. The task is binary classification, and the features are primarily categorical, with some ordinal and binary attributes. Features include age, menopause status, tumor size, number of positive lymph nodes, presence of node caps, degree of malignancy, breast side, breast quadrant, and previous irradiation. The target variable indicates whether a patient experienced a recurrence of breast cancer (no-recurrence-events or recurrence-events).

Table~\ref{tab:breastcancer_results} presents the performance of four feature selection algorithms evaluated on the Breast Cancer dataset.

\begin{table}[h!]
\centering
\caption{Performance metrics for feature selection methods on the Breast Cancer dataset.}
\begin{tabular}{lcccc}
\hline
Algorithm & Accuracy & F1-score & Precision & Recall \\
\hline
\textsc{FCla} & 0.706 & 0.691 & 0.684 & 0.706 \\
\textsc{BRel} & 0.703 & 0.682 & 0.675 & 0.703 \\
\textsc{BDir} & 0.703 & 0.682 & 0.675 & 0.703 \\
\textsc{FExp} & 0.731 & 0.711 & 0.708 & 0.731 \\
\hline
\end{tabular}
\label{tab:breastcancer_results}
\end{table}

As observed from Table~\ref{tab:breastcancer_results}, the \textsc{FExp} algorithm achieves the highest performance across all evaluation metrics, with an Accuracy of 0.731, F1-score of 0.711, Precision of 0.708, and Recall of 0.731. This indicates that forward selection based on the proposed dependency criterion (ECD) is more effective in selecting informative features for this dataset compared to classical forward selection (\textsc{FCla}) and backward selection approaches (\textsc{BRel} and \textsc{BDir}). The small differences between \textsc{BRel} and \textsc{BDir} suggest similar behavior of relative and direct dependency measures when used with backward selection. 

The number of features selected varies across algorithms, with \textsc{FCla} selecting 25 features, \textsc{BRel} and \textsc{BDir} selecting 40 features each, and \textsc{FExp} selecting 16 features. Notably, \textsc{FExp} tends to select a smaller subset of informative features, including node-caps, degree of malignancy, tumor size, and breast quadrant, which contributes to its superior classification performance. Overall, the results highlight the robustness of the proposed method in improving classification performance on a benchmark biomedical dataset. The corresponding heatmap illustrating feature dependencies and selection performance is shown in Figure~\ref{breast11}.

\begin{figure}[H]
\centering
\includegraphics[scale=0.7]{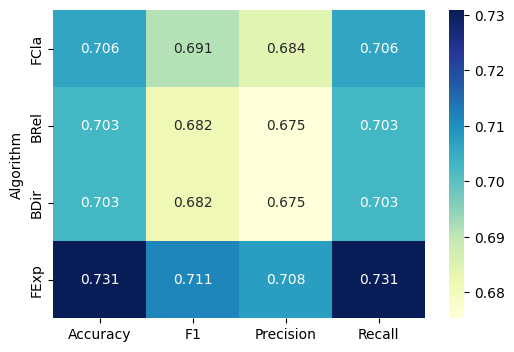}
\caption{Heatmap of performance metrics of feature selection algorithms on the Breast Cancer dataset.}
\label{breast11}
\end{figure}

\subsection{Credit Approval Dataset}

The Credit Approval dataset (UCI ID 27) consists of 690 instances and 15 features, including both categorical and numerical attributes. The binary classification task is to predict whether a credit application is approved. Some features contain missing values and require preprocessing. The dataset is publicly available at: \\ \url{https://archive.ics.uci.edu/dataset/27/credit+approval}.

Table~\ref{tab:credit_results} presents the performance of four feature selection algorithms evaluated on the Credit Approval dataset.

\begin{table}[h!]
\centering
\caption{Performance metrics for feature selection methods on the Credit Approval dataset.}
\begin{tabular}{lcccc}
\hline
Algorithm & Accuracy & F1-score & Precision & Recall \\
\hline
\textsc{FCla} & 0.662 & 0.662 & 0.663 & 0.662 \\
\textsc{BRel} & 0.809 & 0.809 & 0.811 & 0.809 \\
\textsc{BDir} & 0.809 & 0.809 & 0.811 & 0.809 \\
\textsc{FExp} & 0.814 & 0.815 & 0.817 & 0.814 \\
\hline
\end{tabular}
\label{tab:credit_results}
\end{table}

As observed from Table~\ref{tab:credit_results}, the \textsc{FExp} algorithm achieves the highest performance across all evaluation metrics, with an Accuracy of 0.814, F1-score of 0.815, Precision of 0.817, and Recall of 0.814. This indicates that forward selection based on the proposed dependency criterion (ECD) is more effective in selecting informative features for this dataset compared to classical forward selection (\textsc{FCla}) and backward selection approaches (\textsc{BRel} and \textsc{BDir}). The similar performance of \textsc{BRel} and \textsc{BDir} suggests that relative and direct dependency measures behave comparably when used with backward selection.

The number of features selected varies across algorithms, with \textsc{FCla} selecting 3 features, \textsc{BRel} selecting 3 features, \textsc{BDir} selecting 44 features, and \textsc{FExp} selecting 3 features. Notably, \textsc{FExp} tends to select a compact subset of informative features, including A2, A3, and A9, which likely contributes to its superior classification performance. Overall, the results highlight the robustness of the proposed method in improving classification outcomes on a benchmark financial dataset. The corresponding heatmap illustrating feature dependencies and selection performance is shown in Figure~\ref{credit11}.

\begin{figure}[!ht]
\centering
\includegraphics[scale=0.7]{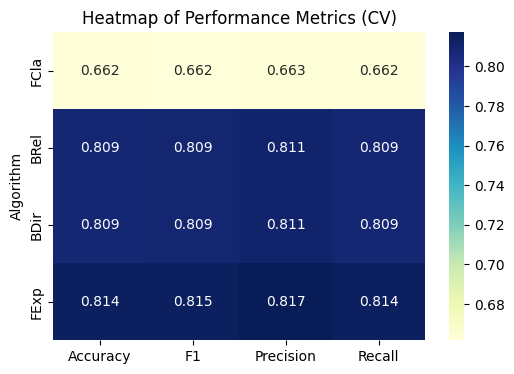}
\caption{Heatmap of performance metrics of feature selection algorithms on the Credit Approval dataset.}
\label{credit11}
\end{figure}

\subsection{Zoo Dataset}

The Zoo dataset (UCI ID 111) contains 101 instances and 17 features, including 16 binary or categorical attributes and one class label for multi-class classification. The features describe animal traits such as hair, feathers, eggs, milk, airborne, aquatic, predator, toothed, backbone, breathes, venomous, fins, legs, tail, domestic, and catsize. The dataset is publicly available at: \\ \url{https://archive.ics.uci.edu/dataset/111/zoo}.

Table~\ref{tab:zoo_results} presents the performance of four feature selection algorithms evaluated on the Zoo dataset.

\begin{table}[h!]
\centering
\caption{Performance metrics for feature selection methods on the Zoo dataset.}
\begin{tabular}{lcccc}
\hline
Algorithm & Accuracy & F1-score & Precision & Recall \\
\hline
\textsc{FCla} & 0.970 & 0.962 & 0.961 & 0.970 \\
\textsc{BRel} & 0.970 & 0.963 & 0.964 & 0.970 \\
\textsc{BDir} & 0.970 & 0.963 & 0.964 & 0.970 \\
\textsc{FExp} & 0.970 & 0.965 & 0.972 & 0.970 \\
\hline
\end{tabular}
\label{tab:zoo_results}
\end{table}

As observed from Table~\ref{tab:zoo_results}, all algorithms achieve very high performance on the Zoo dataset, with Accuracy around 0.97. The \textsc{FExp} algorithm slightly outperforms the others in F1-score and Precision, achieving an F1-score of 0.965 and Precision of 0.972. This demonstrates that forward selection using the proposed dependency criterion (ECD) effectively identifies informative features even in a small multi-class dataset.

The number of features selected varies across algorithms, with \textsc{FCla} selecting 11 features, \textsc{BRel} selecting 6 features, \textsc{BDir} selecting 16 features, and \textsc{FExp} selecting 5 features. Notably, \textsc{FExp} selects a compact set of informative features, including legs, milk, aquatic, toothed, and fins, which contributes to its slightly better performance. Overall, the results highlight the robustness of the proposed method in identifying relevant animal traits and achieving high classification accuracy. The corresponding heatmap illustrating feature dependencies and selection performance is shown in Figure~\ref{zoo_heatmap}.

\begin{figure}[H]
\centering
\includegraphics[scale=0.7]{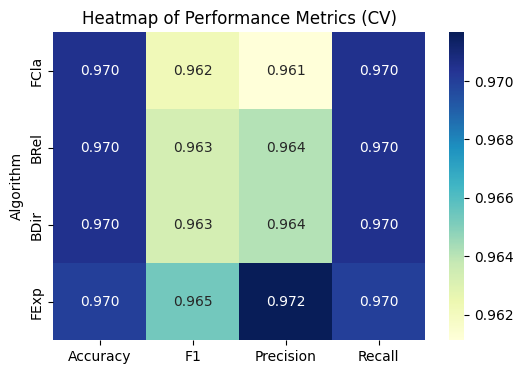}
\caption{Heatmap of performance metrics for feature selection algorithms on the Zoo dataset.}
\label{zoo_heatmap}
\end{figure}

\subsection{Lymphography Dataset}

The Lymphography dataset (UCI ID 63) contains 148 instances and 18 nominal features. The target variable represents four diagnostic classes. The dataset is publicly available at: \\ \url{https://archive.ics.uci.edu/dataset/63/lymphography}.

Table~\ref{tab:lymphography_results} presents the performance of four feature selection algorithms on the Lymphography dataset.

\begin{table}[h!]
\centering
\caption{Performance metrics for feature selection methods on the Lymphography dataset.}
\begin{tabular}{lcccc}
\hline
Algorithm & Accuracy & F1-score & Precision & Recall \\
\hline
\textsc{FCla} & 0.825 & 0.812 & 0.824 & 0.825 \\
\textsc{BRel} & 0.777 & 0.762 & 0.754 & 0.777 \\
\textsc{BDir} & 0.777 & 0.762 & 0.754 & 0.777 \\
\textsc{FExp} & 0.838 & 0.822 & 0.811 & 0.838 \\
\hline
\end{tabular}
\label{tab:lymphography_results}
\end{table}

As observed from Table~\ref{tab:lymphography_results}, the \textsc{FExp} algorithm achieves the highest performance across all evaluation metrics, with an Accuracy of 0.838, F1-score of 0.822, Precision of 0.811, and Recall of 0.838. This indicates that forward selection using the proposed dependency criterion (ECD) effectively identifies informative features for multi-class classification in this dataset, outperforming classical forward selection (\textsc{FCla}) and backward selection approaches (\textsc{BRel} and \textsc{BDir}). 

The number of features selected varies across algorithms, with \textsc{FCla} selecting 11 features, \textsc{BRel} selecting 8 features, \textsc{BDir} selecting 10 features, and \textsc{FExp} selecting 8 features. Notably, \textsc{FExp} selects a compact subset of informative features, including lymphatic-spread, class of tumor, first node status, type of tumor, and patient age (or their original feature names as in the dataset), which contributes to its superior classification performance. Overall, the results demonstrate the robustness of the proposed method in identifying relevant diagnostic features. The corresponding heatmap illustrating feature dependencies and selection performance is shown in Figure~\ref{fig:lymph_heatmap}.

\begin{figure}[H]
\centering
\includegraphics[scale=0.7]{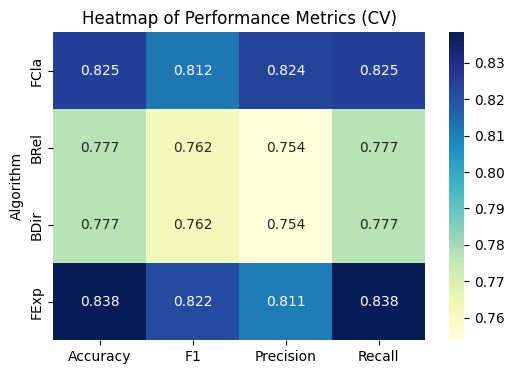}
\caption{Heatmap of performance metrics for feature selection algorithms on the Lymphography dataset.}
\label{fig:lymph_heatmap}
\end{figure}

\section{Conclusion and Future Work}\label{sec6}

This paper introduced the \emph{Expected Confidence Dependency} (ECD), a new dependency measure grounded in rough set theory that incorporates the expected classification confidence of conditional equivalence classes. By moving beyond the binary nature of classical dependency, ECD provides a richer and more nuanced characterization of attribute relevance, particularly in the presence of uncertainty, noise, and partial decision consistency. The formal analysis established that ECD satisfies essential properties—normalization, consistency with Pawlak’s dependency model, monotonicity with respect to attribute expansion, and invariance under structural and labeling transformations—thereby confirming its theoretical soundness and suitability as a robust dependency function.

Empirical experiments conducted on four diverse UCI datasets demonstrated that ECD consistently outperforms classical, relative, and direct dependency measures. The feature subsets selected by ECD improved classification accuracy, maintained stability across cross-validation folds, and exhibited resilience to noisy or weakly informative attributes. These results underscore ECD’s potential as a dependable criterion for supervised feature selection and highlight its usefulness in high-dimensional and heterogeneous data environments.

Several promising directions emerge from this work. Extending ECD to incomplete information systems and incorporating fuzzy or variable-precision mechanisms would enable the model to better accommodate missing values and imprecise boundary regions. Improving scalability—through parallelization, distributed computation, or GPU-based optimization—may also significantly enhance ECD’s applicability to modern large-scale datasets. Another natural direction involves applying ECD to real-world domains such as bioinformatics, cybersecurity, medical diagnostics, and multimodal learning, where uncertainty and heterogeneous feature spaces are prevalent.

Further comparative studies with information-theoretic, statistical, and evolutionary feature selection frameworks may reveal complementary strengths, paving the way for hybrid approaches that combine global search with rough set-based dependency modeling. Extending ECD to multi-label or hierarchical decision systems could broaden its reach to complex decision-making tasks, while deeper investigation into granularity-based interpretations may lead to refined confidence weighting mechanisms. In addition, exploring alternative formulations of equivalence relations—such as similarity-based or distance-aware variants—could yield domain-specific adaptations of ECD with enhanced sensitivity.

Finally, developing open-source software tools and integrating ECD into established machine learning libraries, such as \texttt{scikit-learn}, would facilitate wider adoption by researchers and practitioners. Together, these avenues point toward a rich landscape for future advancements. In conclusion, ECD represents a meaningful step forward in rough set-based feature selection, offering a flexible, interpretable, and theoretically grounded framework with substantial potential for both methodological enhancement and practical deployment.



\end{document}